\newtheoremstyle{willthm}
     {4pt}
     {4pt}
     {}
     {}
     {\bfseries}
     {.}
     { }
     {}
\theoremstyle{willthm}
\newtheorem{thm}{Theorem}[section]
\newtheorem{lem}[thm]{Lemma}
\newtheorem{cor}[thm]{Corollary}
\newtheorem{rem}[thm]{Remark}
\newtheorem{open}[thm]{Open Problem}
\newcommand{\dft}[1]{\textbf{\textit{#1}}}
\let\Pr\@undefined
\DeclareMathOperator{\Pr}{\mathbf{P}}
\newcommand{\abs}[1]{\left|#1\right|}
\newcommand{\floor}[1]{\left\lfloor#1\right\rfloor}
\newcommand{\set}[1]{\left\{#1\right\}}
\newcommand{\st}{\,\middle|\,}
\newcommand{\paren}[1]{\left(#1\right)}
\newcommand{\too}{\,\longrightarrow\,}
\newcommand{\e}{\varepsilon}
\renewcommand{\bar}[1]{\overline{#1}}
\newcommand{\AMSM}{\textbf{AMSM}}
\newcommand{\ASR}{\textbf{ASA}}
\newcommand{\GTGMSM}[1]{Gap$_{#1}$-$3$G-MSM}
\newcommand{\GTGMSS}[1]{Gap$_{#1}$-$3$G-MSS}
\newcommand{\GTDMT}[1]{Gap$_{#1}$-$3$DM-$3$}
\newcommand{\TGSM}{$3$GSM}
\newcommand{\TPSA}{$3$PSA}
\newcommand{\TGMSM}{$3$G-MSM}
\newcommand{\TGMSS}{$3$G-MSS}
\newcommand{\TPSAMSM}{\TPSA-MSM}
\newcommand{\TPSAMSS}{\TPSA-MSS}
\DeclareMathOperator{\ins}{ins}
\DeclareMathOperator{\MSM}{MSM}
\DeclareMathOperator{\MSS}{MSS}
\DeclareMathOperator{\opt}{opt}
\DeclareMathOperator{\sOh}{\tilde{\mathcal{O}}}
\DeclareMathOperator{\stab}{stab}
\title{It's Not Easy Being Three: The Approximability of Three-Dimensional Stable Matching Problems}
\author{Rafail Ostrovsky \thanks{University of California, Los Angeles (Departments of Computer Science and Mathematics). Work supported in part by NSF grants 09165174, 1065276, 1118126 and 1136174, US-Israel BSF grant 2008411, OKAWA Foundation Research Award, IBM Faculty Research Award, Xerox Faculty Research Award, B. John Garrick Foundation Award, Teradata Research Award, and Lockheed-Martin Corporation Research Award. This material is based upon work supported by the Defense Advanced Research Projects Agency through the U.S. Office of Naval Research under Contract N00014-11-1-0392. The views expressed are those of the author and do not reflect the official policy or position of the Department of Defense or the U.S. Government.}
\and
Will Rosenbaum \thanks{University of California, Los Angeles (Department of Mathematics).}
}
\date{\today}
\begin{document}
  \maketitle
  \begin{abstract}
    In 1976, Knuth \cite{Knuth97} asked if the stable marriage problem (SMP) can be generalized to marriages consisting of 3 genders. In 1988, Alkan \cite{Alkan88} showed that the natural generalization of SMP to 3 genders (\TGSM) need not admit a stable marriage. Three years later, Ng and Hirschberg \cite{NH91} proved that it is NP-complete to determine if given preferences admit a stable marriage. They further prove an analogous result for the 3 person stable assignment (\TPSA) problem.

    In light of Ng and Hirschberg's NP-hardness result for \TGSM\ and \TPSA, we initiate the study of approximate versions of these problems. In particular, we describe two optimization variants of \TGSM\ and \TPSA: \emph{maximally stable marriage/matching} (\emph{MSM}) and \emph{maximum stable submarriage/submatching} (\emph{MSS}). We show that both variants are NP-hard to approximate within some fixed constant factor. Conversely, we describe a simple polynomial time algorithm which computes constant factor approximations for the maximally stable marriage and matching problems. Thus both variants of MSM are APX-complete.
  \end{abstract}


  \section{Introduction}
  \label{sec:introduction}

  \subsection{Previous Work}
  \label{sec:previous-work}

  Since Gale and Shapley first formalized and studied the stable marriage problem (SMP) in 1962 \cite{GS62}, many variants of the SMP have emerged (see, for example,  \cite{GI89, Knuth97, Manlove13, RS92}). While many of these variants admit efficient algorithms, two notably do not\footnote{Assuming, of course, P$\neq$NP!}: (1) incomplete preferences with ties \cite{IMMM99}, and (2) $3$ gender stable marriages (\TGSM) \cite{NH91}. 

  In the case of incomplete preferences with ties, it is NP-hard to find a maximum cardinality stable marriage \cite{IMMM99}. The intractability of exact computation for this problem led to the study of approximate versions of the problem. These investigations have resulted in hardness of approximation results \cite{IMO09, Yanagisawa07} as well as constant factor approximation algorithms \cite{Kiraly11, Kiraly13, Paluch14, Yanagisawa07}. 

  In \TGSM, players are one of three genders: women, men, and dogs (as suggested by Knuth). Each player holds preferences over the set of \emph{pairs} of players of the other two genders. The goal is to partition the players into families, each consisting of one man, one woman, and one dog, such that no triple mutually prefer one another to their assigned families. In 1988, Alkan showed that for this natural generalization of SMP to three genders, there exist preferences which do not admit a stable marriage \cite{Alkan88}. In 1991, Ng and Hirschberg showed that, in fact, it is NP-complete to determine if given preferences admit a stable marriage \cite{NH91}. They further generalize this result to the three person stable assignment problem (\TPSA).  In \TPSA, each player ranks all pairs of other players without regard to gender. The goal is to partition players into disjoint triples where again, no three players mutually prefer each other to their assigned triples. 

  Despite the advances for stable marriages with incomplete preferences and ties (see \cite{Manlove13} for an overview of relevant work), analogous approximability results have not been obtained for $3$ gender variants of the stable marriage problem. In this paper, we achieve the first substantial progress towards understanding the approximability of \TGSM\ and \TPSA. 

  \subsection{Overview of our results}

  \subsubsection{$3$ gender stable marriages (\TGSM)}
  \label{sec:\TGSM-overview}

  We formalize two optimization variants of \TGSM: maximally stable marriage (\TGMSM) and maximum stable submarriage (\TGMSS). For \TGMSM, we seek a perfect ($3$ dimensional) marriage which minimizes the number of unstable triples---triples of players who mutually prefer each other to their assigned families. For \TGMSS, we seek a largest cardinality \emph{sub}marriage which contains no unstable triples among the married players. Exact computation of both of these problems is NP-hard by Ng and Hirschberg's result \cite{NH91}. Indeed, exact computation of either allows one to detect the existence of a stable marriage. 

  We obtain the following inapproximability result for \TGMSM\ and \TGMSS.

  \begin{thm}[Special case of Theorem \ref{thm:main-3gsm-lb}]
    \label{thm:msm-lb-informal}
    There exists an absolute constant $c < 1$ such that it is NP-hard to approximate \TGMSM\ and \TGMSS\ to within a factor $c$.
  \end{thm}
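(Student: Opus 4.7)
The plan is to obtain Theorem \ref{thm:msm-lb-informal} by a gap-preserving reduction from a bounded-occurrence version of $3$-dimensional matching, namely \GTDMT{} (instances where every element occurs in at most $3$ triples), whose APX-hardness is classical. Such bounded-degree variants are essential because we need to control how many ``spurious'' blocking triples can arise outside the gadget structure; if elements could appear in unboundedly many triples, the number of potentially unstable triples in the reduction could explode and wipe out the multiplicative gap.

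Given a \GTDMT{} instance with coordinate sets $X,Y,Z$ of size $n$ and a triple family $T \subseteq X \times Y \times Z$, I would build a \TGSM\ instance by identifying men, women, and dogs with $X$, $Y$, and $Z$, respectively, and encoding each triple $t = (x,y,z)\in T$ as a mutually highly-ranked family: $x$ puts $(y,z)$ near the top of its preference list over $Y\times Z$, and symmetrically for $y$ and $z$. All pairs that do not correspond to a triple in $T$ are ranked strictly below every $T$-pair by every player. For \TGMSS\ (where unmatched players are allowed), a stable submarriage then consists essentially of non-overlapping triples drawn from $T$, so its size is controlled by the maximum $3$-dimensional matching in $T$. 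For \TGMSM\ (a perfect $3$-matching minimizing blocking triples), I would augment the construction with dummy players and ``filler'' families of uniformly low mutual rank so that a perfect matching always exists; then any part of the matching not drawn from $T$ corresponds to an uncovered element in the original $3$DM instance, and each such element can be shown to contribute at least one blocking triple.

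The gap analysis would then proceed as follows. In the YES case (a $3$DM matching of size at least $\alpha n$), using those $\alpha n$ triples for families and filling the rest with dummies yields a marriage with at most $O((1-\alpha)n)$ blocking triples (and a stable submarriage of size at least $\alpha n$ with no blocking triples). In the NO case (every $3$DM matching has size at most $\beta n$ with $\beta < \alpha$), any marriage must use fewer than $\beta n$ $T$-families, leaving at least $(1-\beta)n$ players forced into non-$T$ families, each producing a blocking triple through a top-ranked $T$-pair. The bounded-occurrence hypothesis is what makes the second step quantitative: it caps the total number of potential blocking triples by $O(n)$, so the ratio between the YES and NO counts is bounded away from $1$ by an absolute constant $c$.

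The main obstacle I anticipate is the converse direction of the NO-case argument: ensuring that a perfect marriage with few blocking triples \emph{must} come from a large $3$DM matching, rather than from some exotic combination of non-$T$ families that happens to avoid blocking triples. Handling this requires the preference structure to be rigid enough that any uncovered $X$-element $x$ has some ``intended'' pair $(y,z) \in T$ where $y$ and $z$ are themselves in unfavorable assignments, thereby forcing $(x,y,z)$ to block. Making this charging argument work without double-counting, and showing that the loss incurred is only a constant factor, is the technically delicate core of the proof and dictates the precise design of the filler gadgets and tie-breaking in the preference lists.
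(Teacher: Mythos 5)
Your starting point (a gap-preserving reduction from bounded-occurrence $3$DM) matches the paper's, but the quantitative core of your argument does not prove the stated theorem because it targets the wrong objective. \TGMSM\ asks to maximize $\stab(M)$, the number of stable triples among \emph{all} $n^3$ triples of $A\times B\times D$, so a factor-$c$ inapproximability result needs NO-instances in which every marriage has $\Omega(n^3)$ blocking triples. Your construction produces only $\Omega(n)$ blockers in the NO case, and the very feature you lean on (``the bounded-occurrence hypothesis caps the total number of potential blocking triples by $O(n)$'') guarantees $\stab(M)\ge n^3-O(n)$ for \emph{every} marriage of your instances, so the trivial algorithm is already a $(1-o(1))$-approximation on them and no constant-factor gap can survive. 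The ``$O((1-\alpha)n)$ versus $(1-\beta)n$'' comparison you set up is a gap for \emph{minimizing} blocking triples, not for $\MSM$. Likewise, your YES case leaves $O((1-\alpha)n)$ blockers, whereas \GTGMSM{c} requires $\MSM(P)=n^3$ exactly; this is why the paper first strengthens Kann's reduction (Appendix \ref{sec:gap-hardness}) so that satisfiable instances yield \emph{perfect} $3$DM matchings, and introduces garbage-collector players $w_i, x_i, y_i, z_i$ so that the unused copies of each $a_i$ have blocking-free homes.

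The missing idea is Theorem \ref{thm:msm}: an explicit preference profile on $A_1\cup A_2$, $B_1\cup B_2$, $D_1\cup D_2$ for which \emph{every} marriage has at least $n^3/128$ unstable triples, obtained by tiling the $2\times 2\times 2$ no-stable-marriage example. The paper's embedding places this profile in the tail of every preference list, so that in the NO case the $\Omega(n)$ women who cannot receive one of their top three choices, together with their partners, induce $\Omega(n^3)$ blockers; your proposal has no mechanism for producing a cubic number of them. Your \TGMSS\ claim (that stable submarriages correspond to $3$DM matchings) is also unsubstantiated: a submarriage is stable as soon as the matched players contain no complete $T$-triple all of whose members prefer it, which is a covering-type condition rather than a matching condition. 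The paper sidesteps this entirely by deriving MSS hardness from MSM hardness: padding a stable submarriage of size $(1-\e)n$ to a full marriage adds at most $3\e n^3$ blocking triples, so a decider for \GTGMSS{(1-\e)} decides \GTGMSM{(1-3\e)}.
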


  In fact, we prove a slightly stronger result for \TGMSM\ and \TGMSS. We show that the problem of determining if given preferences admit a stable marriage or if all marriages are ``far from stable'' is NP-hard. See Section \ref{sec:\TGSM} and Theorem \ref{thm:main-3gsm-lb} for the precise statements. In the other direction, we describe a polynomial time constant factor approximation algorithm for \TGMSM.

  \begin{thm}
    \label{thm:amsm-informal}
    There exists a polynomial time algorithm, \AMSM, which computes a $\frac 4 9$-factor approximation to \TGMSM.
  \end{thm}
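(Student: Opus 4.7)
The plan is to design the algorithm AMSM as a reduction to bipartite stable matching and analyze it via a charging argument against an optimal solution. My approach would be to run three subroutines, one for each way of choosing an ``initial pair'' of genders, and return whichever candidate matching has the most stable triples. Fixing such a pair, say men and women, the subroutine projects each 3D preference to a 2D one by ranking each potential partner according to the best 3D-pair involving them (e.g., $m$ ranks $w$ by $m$'s best-ranked (woman, dog) pair involving $w$), runs Gale--Shapley to obtain a bipartite stable matching between the pair, and then completes each matched pair into a triple by assigning a member of the third gender, for instance via a second Gale--Shapley round in which each couple's preference over dogs is derived from a combination of the two individuals' preferences. Each step is polynomial, so the whole procedure runs in polynomial time.

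For the analysis, let $n^* = |\mathrm{OPT}|$ denote the number of stable families in an optimal \TGSM{} solution. The plan is to show that the total number of stable triples produced across the three candidate matchings is at least $\tfrac{4}{3}\, n^*$, so that the best of the three candidates has at least $\tfrac{4}{9}\, n^*$ stable triples. The key combinatorial claim is that each stable triple $(m,w,d) \in \mathrm{OPT}$ contributes, on average over the three choices of initial pair, at least $\tfrac{4}{3}$ to the sum. The argument would proceed in two stages: first a lemma showing that the projected preferences preserve enough of OPT's structure that Gale--Shapley's output agrees with OPT on a controlled fraction of pairs, and second a lemma showing that the third-gender completion does not destroy too many of the pairs already aligned with OPT.

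The main obstacle is bounding the information loss incurred by projection. A globally stable 3D triple need not project to a locally stable pair under either of the three 2D projections, because a blocking pair in 2D may not correspond to any actual 3D blocking triple, and conversely. The technical heart of the proof would be a structural lemma establishing that, for any OPT-stable $(m,w,d)$, at least one of the three projections is ``faithful'' enough that Gale--Shapley matches the appropriate pair from $\{(m,w),(m,d),(w,d)\}$, and that the third gender can then be aligned with this pair without introducing a new blocking triple. The factor $\tfrac{4}{9} = (\tfrac{2}{3})^2$ is suggestive: it likely arises by showing that each of the two phases (bipartite stability, then completion) preserves a $\tfrac{2}{3}$ fraction of OPT-stable triples on average across the three pairings, so that the composed guarantee, combined with returning the best of three candidates, yields the claimed $\tfrac{4}{9}$ bound.
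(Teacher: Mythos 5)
Your proposal has a genuine gap: the entire argument rests on an unproven structural lemma --- that for each OPT triple at least one of the three 2D projections is ``faithful'' enough for Gale--Shapley to recover the corresponding pair, and that the third-gender completion can then be aligned without creating new blocking triples. You correctly identify this as the main obstacle, but you never supply the lemma, and there is little reason to believe it holds: as you note yourself, stability does not survive projection in either direction, and the paper's Theorem \ref{thm:msm} exhibits preferences for which \emph{every} marriage has $\Omega(n^3)$ blocking triples, so no completion step can avoid introducing new blocking triples in general. The numerology $\frac{4}{9} = \left(\frac{2}{3}\right)^2$ is suggestive but is not an argument. There is also a mismatch in the objective: \TGMSM\ maximizes $\stab(M)$, the number of stable triples among all $n^3$ triples of $A \times B \times D$, whereas your charging argument is phrased in terms of the $n$ families of an optimal solution (closer to the \TGMSS\ objective); a guarantee proportional to the number of OPT families says nothing about a $\frac{4}{9}$ fraction of $\MSM(P) \leq n^3$ stable triples.

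The paper's route is much simpler and avoids any comparison with OPT. Since $\MSM(P) \leq n^3$ trivially, it suffices to produce a marriage with $\stab(M) \geq \frac{4}{9}n^3 - O(n^2)$ in absolute terms. \AMSM\ greedily selects a family $a_i b_j d_k$ maximizing the number of triples it cannot block; a pigeonhole count (Lemma \ref{lem:good-edges}) shows some triple is ranked among the top $n^2/3 + 1$ choices by at least two of its three members, so the chosen family participates in at most $\frac{5}{3}n^2 + O(n)$ unstable triples. Removing that family and recursing, the total instability telescopes to $\frac{5}{9}n^3 + O(n^2)$, which gives the claimed bound. If you wish to pursue a projection-plus-Gale--Shapley algorithm, the step you would need to prove is precisely an absolute lower bound on the number of stable triples it guarantees against the trivial $n^3$ benchmark --- which is exactly the step your proposal leaves open.
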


  \begin{cor}
    \TGMSM\ is APX-complete.
  \end{cor}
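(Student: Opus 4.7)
The plan is to obtain APX-completeness as an immediate consequence of the two preceding theorems. Membership in APX follows from Theorem \ref{thm:amsm-informal}: the algorithm \AMSM\ runs in polynomial time and produces a marriage whose objective value (number of unstable triples, or equivalently number of stable triples, depending on the precise formulation fixed in Section \ref{sec:\TGSM-overview}) is within a constant factor $\tfrac{4}{9}$ of the optimum, so \TGMSM\ admits a polynomial-time constant-factor approximation.

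For the hardness direction, I would invoke Theorem \ref{thm:msm-lb-informal}: there is an absolute constant $c < 1$ such that approximating \TGMSM\ within factor $c$ is NP-hard. Combined with APX membership, this rules out a PTAS for \TGMSM\ unless $\mathrm{P}=\mathrm{NP}$, which under the definition of APX-completeness used here (an APX problem for which no PTAS exists assuming $\mathrm{P}\neq\mathrm{NP}$) immediately yields the conclusion.

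The only mild subtlety I would want to address explicitly is the distinction between NP-hardness of constant-factor approximation and APX-hardness via PTAS reduction. To be safe, I would note that the reduction used to prove Theorem \ref{thm:main-3gsm-lb} is a gap-producing reduction from a known APX-hard problem (for instance, bounded-occurrence \textsc{Max-3SAT} or \textsc{3-Dimensional Matching} with bounded degree, whichever is used in Section \ref{sec:\TGSM}), and such gap reductions can be arranged to be $L$-reductions, hence PTAS reductions. Therefore \TGMSM\ is APX-hard in the strict sense, and combined with the approximation algorithm of Theorem \ref{thm:amsm-informal} it is APX-complete. I do not expect any real obstacle here; the entire argument is a bookkeeping exercise consolidating the two theorems, with the only thing to verify being that the specific reduction cited for Theorem \ref{thm:msm-lb-informal} is indeed an $L$-reduction (or can be trivially upgraded to one by a linear scaling of objectives).
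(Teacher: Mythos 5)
Your proposal matches the paper exactly: the corollary is stated without proof there, as an immediate consequence of combining the constant-factor approximation algorithm (membership in APX) with the gap-hardness result (no PTAS unless P $=$ NP). Your extra remark about upgrading the gap reduction to a PTAS/$L$-reduction addresses a subtlety the paper silently glosses over, and is the right thing to check; just note that the objective scales cubically ($\MSM(P) = \Theta(m^3)$ versus $\opt(I) = \Theta(m)$ for the $3$DM-$3$ instance), so the ``linear scaling'' would really need to be a renormalization of the objective, not a trivial identity.
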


  \subsubsection{Three person stable assignment (\TPSA)}
  \label{sec:3sr-overview}

  We also consider the three person stable assignment problem (\TPSA). In this problem, players rank all pairs of other players and seek a ($3$ dimensional) matching---a partition of players into disjoint triples. Notions of stability, maximally stable matching, and maximum stable submatching are defined exactly as the analogous notions for \TGSM. We show that Theorems \ref{thm:msm-lb-informal} and \ref{thm:amsm-informal} have analogues with \TPSA:

  \begin{thm}
    \label{thm:sr-msm-lb-informal}
    There exists a constant $c < 1$ such that it is NP-hard to approximate \TPSAMSM\ and \TPSAMSS\ to within a factor $c$.
  \end{thm}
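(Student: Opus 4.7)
The plan is to prove Theorem \ref{thm:sr-msm-lb-informal} by reduction from the hardness result for \TGMSM\ and \TGMSS\ (Theorem \ref{thm:msm-lb-informal}). Given a \TGSM\ instance $I$ on players $W \cup M \cup D$ (women, men, dogs), I would construct a \TPSA\ instance $I'$ on the same ground set (possibly padded with a bounded number of ``dummy'' triples to equalize the three classes) whose preferences coerce every nearly-stable matching to respect the tripartition. Concretely, each player in $I'$ ranks at the top of its list all \emph{valid} pairs---pairs containing exactly one representative of each of the two classes other than its own---in exactly the order prescribed by $I$, and places all \emph{invalid} pairs (pairs containing a player of its own class, or two players of the same non-self class) below every valid pair in an arbitrary order. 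This transformation is clearly polynomial time and preserves the size of the instance up to a constant factor.

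Next I would establish the two directions of the gap reduction. For \textbf{completeness}, if $I$ admits a stable marriage $F$, then $F$ (together with the dummy triples) is a stable matching in $I'$: any putative blocking triple in $I'$ involving only valid pairs would already block $F$ in $I$, and any blocking triple involving an invalid pair is impossible because each participant would be trading down from its valid partner. For \textbf{soundness}, I would show that if $I$ is ``far from stable'' (every \TGSM\ marriage has at least $\alpha n$ blocking triples), then every \TPSA\ matching in $I'$ has at least $\beta n$ blocking triples for some $\beta = \beta(\alpha) > 0$. The argument proceeds by taking an arbitrary \TPSA\ matching $M'$, classifying its triples as \emph{proper} (one from each class) or \emph{improper}, and using a local swap argument to convert $M'$ into a genuine \TGSM\ marriage $F'$ while controlling the net change in blocking triples.

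The main obstacle is the soundness direction, and specifically the claim that improper triples themselves are plentiful sources of blocking triples. The key observation to exploit is that any improper triple $T$ contains at least one player $p$ whose assigned pair is invalid; since every valid pair is ranked above every invalid pair in $p$'s list, $p$ strictly prefers any valid pair to its current one. Provided the number of improper triples is not too small, a counting argument (or a greedy repairing procedure that pairs up improper triples and re-forms them into proper ones) produces $\Omega(n)$ blocking triples, either directly from improper triples or inherited from the \TGSM\ lower bound after conversion to a marriage $F'$. Tracking the constants carefully yields an explicit $c < 1$ for which $c$-approximation of \TPSAMSM\ or \TPSAMSS\ would decide the gap problem for \TGMSM\ or \TGMSS, contradicting Theorem \ref{thm:msm-lb-informal}.

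Finally, to handle \TPSAMSS\ specifically, I would observe that the same construction works almost verbatim: any large stable \emph{sub}matching in $I'$ either consists predominantly of proper triples (yielding a large stable submarriage in $I$), or contains many improper triples and can be pruned to a proper submatching of comparable size at the cost of a constant factor loss. This gives the desired constant-factor inapproximability for both optimization variants simultaneously.
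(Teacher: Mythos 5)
Your proposal matches the paper's proof: the paper likewise views \TGSM\ as the special case of \TPSA\ with unacceptable partners (Remark \ref{rem:3sr}), completes each player's list by appending all invalid pairs arbitrarily below the valid ones, and asserts that the gap is preserved (explicitly leaving the details to the reader). Your proper/improper-triple dichotomy---improper triples force all of their members onto invalid pairs and hence either occur often enough to generate $\Omega(n^3)$ blocking triples outright or can be swapped out at a cost of $O(kn^2)$ affected triples---is a correct and natural way to fill in exactly the soundness details the paper omits.
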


  \begin{thm}
    \label{thm:asr-informal}
    There exists a polynomial time algorithm, \ASR, which computes a $\frac 4 9$-factor approximation to \TPSAMSM.
  \end{thm}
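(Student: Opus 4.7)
The approach is to mirror the design and analysis of \AMSM\ from Theorem \ref{thm:amsm-informal}, adapted to the gender-free setting of \TPSA. Since \TPSA\ is a symmetrized version of \TGSM\ in which any triple of distinct players may form a family, \ASR\ should treat all players uniformly rather than partitioning them into gender classes. I remark that the naive reduction---randomly partitioning players into three genders and running \AMSM\ on the induced \TGSM\ instance---only preserves each triple of an optimal matching with probability about $2/9$, and thus yields an $\tfrac{8}{81}$-approximation rather than the required $\tfrac{4}{9}$; so a direct adaptation is needed.

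I would first describe \ASR\ concretely. Borrowing the structure of \AMSM, \ASR\ alternates proposal and acceptance phases: in each phase, players (in a fixed order that breaks symmetry) propose to their most-preferred feasible pair of partners, and pairs tentatively accept their best proposal while rejecting the rest. After polynomially many rounds a valid $3$-dimensional matching is extracted, possibly followed by a clean-up step that swaps out triples locally dominated by available alternatives. The running time is polynomial by the same argument as for \AMSM.

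Next, I would analyze the number of stable triples in the output $M$ of \ASR. Let $M^*$ be an optimal \TPSAMSM\ solution with $\opt = |S(M^*)|$ stable triples. The goal is to show $|S(M)| \geq \tfrac{4}{9}\opt$. I would adapt the charging argument underlying \AMSM: assign each stable triple of $M^*$ a ``witness'' triple in $S(M)$ that can be blamed for its absence, and bound the number of triples of $M^*$ charging to any single witness by the constant that yields the $\tfrac{4}{9}$ ratio.

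The central obstacle is that \TPSA\ admits a strictly larger family of potential blocking triples than \TGSM: any three players form a candidate blocker rather than only gender-respecting ones. These extra blockers can destabilize triples that the analysis of \AMSM\ would certify as stable. I would address this either by (i) arguing that the proposal mechanism in \ASR\ already rules out these extra blockers---since a player participating in such a blocker would have proposed to the corresponding pair during some earlier round, forcing a rejection elsewhere---or (ii) augmenting the clean-up phase to explicitly eliminate triples vulnerable to ungendered blockers. In either case, the core combinatorial estimates that underlie the $\tfrac{4}{9}$ bound should carry over essentially unchanged from the proof of Theorem \ref{thm:amsm-informal}.
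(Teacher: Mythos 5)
There is a genuine gap here, starting with a mischaracterization of the algorithm you are trying to adapt. \AMSM\ is not a proposal--acceptance (deferred-acceptance) procedure: it is a one-shot greedy algorithm that, at each step, selects the triple $a_ib_jd_k$ maximizing the size of its \emph{stable set} $S_{ijk}$ (the set of triples that cannot block against $a_ib_jd_k$ because at least one member of $a_ib_jd_k$ weakly prefers its assignment), adds that family, deletes its members, and recurses. The engine of the whole proof is a counting lemma (Lemma \ref{lem:good-edges}): by a pigeonhole argument over each player's top roughly $n^2/3$ ranked pairs, some triple is ranked that highly by at least two of its three members, which forces $\abs{S_{ijk}} \geq \tfrac{4}{3}n^2 - O(n)$. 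Your sketch contains no analogue of this lemma, and without it there is no quantitative handle on the output at all. The adaptation to \TPSA\ (the algorithm \ASR) is exactly this greedy procedure with $S_{abc}$ defined over $\binom{U}{3}$; the corresponding lemma gives $\abs{S_{abc}} \geq 6n^2 - O(n)$, and since each triple meets at most $3\binom{3n}{2} \leq \tfrac{27}{2}n^2$ potential blockers, each greedy step introduces at most $\tfrac{15}{2}(n-i+1)^2 + O(n)$ new unstable triples, summing to $\tfrac{5}{2}n^3 + O(n^2)$.

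Your analysis framework is also not workable as stated. The $\tfrac{4}{9}$ ratio does not come from a charging argument against an optimal matching $M^*$; it comes from comparing the blocking-triple count against the \emph{trivial} upper bound $\opt \leq \binom{3n}{3} = \tfrac{9}{2}n^3 - O(n^2)$ on the number of stable triples, giving $\stab(M) \geq \tfrac{9}{2}n^3 - \tfrac{5}{2}n^3 - O(n^2) = 2n^3 - O(n^2) \geq \tfrac{4}{9}\opt$. Saying you will ``bound the number of triples of $M^*$ charging to any single witness by the constant that yields the $\tfrac{4}{9}$ ratio'' is circular --- it names the target rather than supplying an argument. Finally, your central worry (that \TPSA\ has more candidate blockers than \TGSM) is real but is handled automatically by the direct greedy adaptation: the stable set $S_{abc}$ is defined over all of $\binom{U}{3}$, and both the numerator ($6n^2$ guaranteed non-blockers per step) and the denominator ($\tfrac{27}{2}n^2$ potential blockers per triple, $\tfrac{9}{2}n^3$ total triples) scale so that the ratio remains exactly $\tfrac{4}{9}$. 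No reduction to \TGSM, random gender assignment, or clean-up phase is needed.
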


  Our proofs of the lower bounds in Theorems \ref{thm:msm-lb-informal} and  \ref{thm:sr-msm-lb-informal} use a reduction from the $3$ dimensional matching problem ($3$DM) to \TGMSM. Kann \cite{Kann91} showed that Max-$3$DM is Max-SNP complete. Thus, by the PCP theorem \cite{ALMSS98, AS98} and \cite{Dinur07}, it is NP-complete to approximate Max-$3$DM to within some fixed constant factor. Our hardness of approximation results then follow from a reduction from $3$DM to \TGMSM.

  Theorems \ref{thm:amsm-informal} and \ref{thm:asr-informal} follow from a simple greedy algorithm. Our algorithm constructs marriages (or matchings) by greedily finding triples whose members are guaranteed to participate in relatively few unstable triples. Thus, we are able to efficiently construct marriages (or matchings) with a relatively small fraction of blocking triples.


  \section{Background and Definitions}
  \label{sec:background}

  \subsection{3 Gender Stable Marriage (\TGSM)}
  \label{sec:\TGSM}

  In the 3 gender stable marriage problem, there are disjoint sets of \dft{women}, \dft{men}, and \dft{dogs} denoted by $A$ (for Alice), $B$ (for Bob), and $D$ (for Dog), respectively. We assume $\abs{A} = \abs{B} = \abs{D} = n$, and we denote the collection of \dft{players} by $V = A \cup B \cup D$. A \dft{family} is a triple $abd$ consisting of one woman $a \in A$, one man $b \in B$, and one dog $d \in D$. A \dft{submarriage} $S$ is a set of pairwise disjoint families. A \dft{marriage} $M$ is a maximal submarriage---that is, one in which every player $v \in V$ is contained in some (unique) family so that $\abs{M} = n$. Given a submarriage $S$, we denote the function $p_S : V \to V^2 \cup \set{\varnothing}$ which assigns each player $v \in V$ to their partners in $S$, with $p_S(v) = \varnothing$ if $v$ is not contained in any family in $S$.

  Each player $v \in V$ has a \dft{preference}, denoted $\succ_v$ over pairs of members of the other two genders. That is, each woman $a \in A$ holds a total order $\succ_a$ over $B \times D \cup \set{\varnothing}$, and similarly for men and dogs. We assume that each player prefers being in some family to having no family. For example, $bd \succ_a \varnothing$ for all $a \in A$, $b \in B$ and $d \in D$. An instance of the \dft{three gender stable marriage problem} (\dft{\TGSM}) consists of $A$, $B$, and $D$ together with a set $P = \set{\succ_v \st v \in V}$ of preferences for each $v \in V$. 

  Given a submarriage $S$, a triple $abd$ is an \dft{unstable triple} if $a$, $b$ and $d$ each prefer the triple $abd$ to their assigned families in $S$. That is, $abd$ is unstable if and only if $bd \succ_a p_S(a)$, $ad \succ_b p_S(b)$, and $ab \succ_d p_S(d)$. A triple $abd$ which is not unstable is \dft{stable}. In particular, $abd$ is stable if at least one of $a$, $b$ and $d$ prefers their family in $S$ to $abd$. Let $A_S$, $B_S$ and $D_S$ be the sets of women, men and dogs (respectively) which have families in $S$. A submarriage $S$ is \dft{stable} if there are no unstable triples in $A_S \times B_S \times D_S$. 

  Unlike the two gender stable marriage problem, this three gender variant arbitrary preferences need not admit a stable marriage. In fact, for some preferences, every marriage has many unstable triples (see Section \ref{sec:many-blockers}). Thus we consider two optimization variants of the three gender stable marriage problem.

  \subsubsection{Maximally Stable Marriage (\TGMSM)} The \dft{maximally stable marriage problem} (\TGMSM) is to find a marriage $M$ with the maximum number of stable triples with respect to given preferences $P$. For fixed preferences $P$ and marriage $M$, the \dft{stability} of $M$ with respect to $P$ is the number of stable triples in $A \times B \times D$:
  \[
  \stab(M) = \abs{\set{abd \st abd \text{ is stable}}}.
  \]
  Thus, $M$ is stable if and only if $\stab(M) = n^3$. Dually, we define the \dft{instability} of $M$ by $\ins(M) = n^3 - \stab(M)$. For fixed preferences $P$, we define
  \[
  \MSM(P) = \max \set{\stab(M) \st M \text{ is a marriage}}.
  \]
  For preferences $P$ and fixed $c < 1$, we define \dft{\GTGMSM{c}} to be the problem of determining if $\MSM(P) = n^3$ or $\MSM(P) \leq c n^3$. 

  \subsubsection{Maximum Stable Submarriage} The \dft{maximum stable submarriage problem} (\TGMSS) is to find a maximum cardinality stable submarriage $S$. We denote
  \[
  \MSS(P) = \max \set{\abs{S} \st S \text{ is a stable submarriage}}
  \]
  Note that $P$ admits a stable marriage if and only if $\MSS(P) = n$. For fixed $c < 1$, we define \dft{\GTGMSS{c}} to be the problem of determining if $\MSS(P) = n$ or if $\MSS(P) \leq c n$.

  \subsection{Three person stable assignment (\TPSA)}
  \label{sec:\TPSA}

  In the \dft{three person stable assignment problem} (\dft{\TPSA}), there is a set $U$ of $\abs{U} = 3n$ players who wish to be partitioned into $n$ disjoint triples. For a set $C \subseteq U$, we denote the set of $k$-subsets of $C$ by $\binom{C}{k}$. A \dft{submatching} is a set $S \subseteq \binom{U}{3}$ of disjoint triples in $U$. A \dft{matching} $M$ is a maximal submatching---a submatching with $\abs{M} = n$. Given a submatching $S$, $U_S$ is the set of players contained in some triple in $S$:
  \[
  U_S = \set{u \in U \st u \in t \text{ for some } t \in S}.
  \]
  Each player $u \in U$ holds preferences among all pairs of potential partners. That is, each $u \in U$ holds a linear order $\succ_u$ on $\binom{U\setminus\set{u}}{2} \cup \set{\varnothing}$. We assume that each player prefers every pair to an empty assignment. Given a set $P$ of preferences for all the players and a submatching $S$, we call a triple $uvw \in \binom{U_S}{3}$ \dft{unstable} if each of $u$, $v$ and $w$ prefer the triple $uvw$ to their assigned triples in $S$. Otherwise, we call $uvw$ \dft{stable}. A submatching $S$ is \dft{stable} if it contains no unstable triples in $\binom{U_S}{3}$. We define the \dft{stability} of $S$ by
  \[
  \stab(S) = \abs{\set{uvw \in \binom{U_S}{3} \st uvw \text{ is stable}}}.
  \]
  Dually, the \dft{instability} of $S$ is $\ins(S) = \binom{\abs{S}}{3} - \stab(S)$. 

  The \dft{maximally stable matching problem} (\dft{\TPSA-MSM}) is to find a matching $M$ which maximizes $\stab(M)$. The \dft{maximum stable submatching problem} (\dft{\TPSA-MSS}) is to find a stable submatching $S$ of maximum cardinality.

  \begin{rem}
    \label{rem:3sr}
    We may consider a variant of \TPSA\ with \dft{unacceptable partners}. In this variant, each player $u \in U$ ranks only a subset of $\binom{U \setminus\set{u}}{2}$, and prefers being unmatched to unranked pairs. \TGSM\ is a special case of this variant where $U = A \cup B \cup D$ and each player ranks precisely those pairs consisting of one player of each other gender. This observation will make our hardness results for \TGSM\ easily generalize to \TPSA.
  \end{rem}

  \subsection{Hardness of \GTDMT{c}}
  \label{sec:hoa}

  Our proofs of Theorems \ref{thm:msm-lb-informal} and \ref{thm:sr-msm-lb-informal} use a reduction from the three dimensional matching problem ($3$DM). In this section, we briefly review $3$DM, and state the approximability result we require for our lower bound results. 

  Let $W$, $X$ and $Y$ be finite disjoint sets with $\abs{W} = \abs{X} = \abs{Y} = m$. Let $E \subseteq W \times X \times Y$ be a set of edges. A \dft{matching} $M \subseteq E$ is a set of disjoint edges. The \dft{maximum $3$ dimensional matching problem} (\dft{Max-$3$DM}) is to find (the size of) a matching $M$ of largest cardinality in $E$. \dft{Max-$3$DM-$3$} is the restriction of Max-$3$DM to instances where each element in $W \cup X \cup Y$ is contained in at most $3$ edges. For a fixed constant $c < 1$, we define \dft{\GTDMT{c}} to be the problem of determining if an instance $I$ of Max-$3$DM-$3$ has a perfect matching (a matching $M$ of size $m$) or if every matching has size at most $c m$. 

  \begin{thm}
    \label{thm:gap-3dm-3-hardness}
    There exists an absolute constant $c < 1$ such that \GTDMT{c}\ is NP-hard.
  \end{thm}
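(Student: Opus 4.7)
The plan is to combine the PCP theorem with Kann's L-reduction, filling in the citation chain already sketched in the paragraph preceding the theorem. By the PCP theorem \cite{ALMSS98, AS98} (and the combinatorial proof of \cite{Dinur07}), there is an absolute constant $\alpha_0 < 1$ such that it is NP-hard to distinguish satisfiable $3$-SAT formulas from those in which no assignment satisfies more than an $\alpha_0$ fraction of the clauses. Moreover, a standard variable-splitting gadget reduces this gap problem to bounded-occurrence $3$-SAT, in which each variable occurs in at most a fixed constant number of clauses, while preserving both satisfiability and the constant gap.

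Starting from this gap hardness for bounded-occurrence $3$-SAT, I would then apply Kann's L-reduction \cite{Kann91} to Max-$3$DM-$3$. The reduction enjoys the two properties needed for gap-preservation: (i) satisfying assignments of the $3$-SAT instance correspond to perfect matchings of the constructed $3$DM-$3$ instance, so perfect completeness is preserved; and (ii) the L-reduction condition guarantees that an $\alpha_0$ upper bound on the fraction of simultaneously satisfiable clauses yields an upper bound of the form $1 - \beta(1-\alpha_0)$ on the fraction of matched triples, for some constant $\beta$ depending only on the reduction. Setting $c = 1 - \beta(1-\alpha_0) < 1$ then gives the desired gap problem, and its NP-hardness is inherited along the chain of reductions.

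The main obstacle I anticipate is bookkeeping: verifying that each step in the chain (gap amplification, variable-splitting, Kann's reduction) preserves \emph{perfect} completeness, so that a satisfiable formula maps to an instance that actually admits a perfect matching rather than a merely near-perfect one. This matters because \GTDMT{c}\ as defined distinguishes perfect matchings of size $m$ from ones of size at most $cm$, not two sub-perfect thresholds. Since we only need the existence of some $c < 1$, the explicit value need not be tracked, which keeps the accounting manageable. If any intermediate reduction fails to have perfect completeness as stated, standard gadget modifications (for example, adjoining ``dummy'' elements and triples that are forced into every maximum matching and do not affect the soundness analysis) can be used to restore it without sacrificing the constant gap.
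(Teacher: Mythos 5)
Your high-level skeleton (PCP theorem $\to$ gap-hardness of bounded-occurrence $3$SAT $\to$ Kann's reduction $\to$ \GTDMT{c}) is exactly the route the paper takes, but your property (i) --- that Kann's reduction maps satisfiable formulas to instances of $3$DM-$3$ admitting \emph{perfect} matchings --- is false, and repairing it is the entire content of the paper's proof (Appendix A). In Kann's construction each clause $c_j$ contributes a single pair of clause vertices $s_1[j], s_2[j]$, and the root vertices of the rings of trees corresponding to \emph{true} literals can only be covered by clause edges through that pair. So if $c_j$ is satisfied by two or three literals, at most one of the corresponding root vertices gets matched and the others are stranded; a satisfiable instance therefore need not admit a perfect matching. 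Since \GTDMT{c} distinguishes ``has a perfect matching'' from ``all matchings have size $\leq cm$,'' this breaks perfect completeness, which you correctly identify as the one thing that must not break.

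Your fallback --- adjoining dummy elements and triples ``forced into every maximum matching'' --- does not obviously work here, because \emph{which} root vertices are stranded depends on the satisfying assignment (which literals are set true, and which clauses they over-satisfy), so there is no static set of vertices to pad; and unconditionally forcing a dummy triple onto each root vertex would conflict with the clause edges that the completeness argument needs. The paper's actual fix is more specific: take three disjoint copies $f(I)_1 \sqcup f(I)_2 \sqcup f(I)_3$ of Kann's instance and, for each root vertex $u_i[\gamma]$, add the edge $\set{u_i[\gamma]_1, u_i[\gamma]_2, u_i[\gamma]_3}$ joining its three copies. Any root vertex stranded in the per-copy matchings is stranded identically in all three copies and can be absorbed by this cross-copy edge, restoring perfect completeness; a symmetrization argument (one may assume the three restrictions are identical optimal matchings of $f(I)$) plus a linear bound on the total number of vertices shows the constant gap survives on the soundness side. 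Without this construction, or an equally concrete substitute, your argument has a genuine hole at its central step.
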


  Kann showed that Max-$3$DM-$3$ is Max-SNP complete\footnote{The complexity class Max-SNP was introduced by Papadimitriou and Yannakakis in \cite{PY91}, where the authors also show that Max-$3$SAT-$B$ is Max-SNP complete.} by giving an $L$-reduction from Max-$3$SAT-$B$ to Max-$3$DM-$3$ \cite{Kann91}. By the celebrated PCP theorem \cite{ALMSS98, AS98} and \cite{Dinur07}, Kann's result immediately implies that Max-$3$DM-$3$ is NP-hard to approximate to within some fixed constant factor. However, Kann's reduction gives a slightly weaker result than Theorem \ref{thm:gap-3dm-3-hardness}. In Kann's reduction, satisfiable instances of $3$SAT-$B$ do not necessarily reduce to instances of $3$DM-$3$ which admit perfect matchings. In Appendix \ref{sec:gap-hardness}, we describe how to alter Kann's reduction so that satisfiable instances of $3$SAT-$B$ admit perfect matchings, while far-from-satisfiable instances are far from admitting perfect matchings.


  \section{Hardness of Approximation}
  \label{sec:hardness}

  In this section, we prove the main hardness of approximation results. Specifically, we will prove the following theorems.

  \begin{thm}
    \label{thm:main-3gsm-lb}
    \GTGMSM{c}\ and \GTGMSS{c}\ are NP-hard.
  \end{thm}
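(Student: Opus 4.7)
The plan is to give a gap-preserving reduction from \GTDMT{c_0}, which is NP-hard for some absolute constant $c_0<1$ by Theorem~\ref{thm:gap-3dm-3-hardness}. Given a Max-$3$DM-$3$ instance $(W, X, Y, E)$ with $\abs{W}=\abs{X}=\abs{Y}=m$, I would construct a \TGSM\ instance $(A, B, D, P)$ of size $n=\poly(m)$ by embedding $W \subseteq A$, $X \subseteq B$, $Y \subseteq D$ and augmenting with auxiliary ``gadget'' players. Each original player $v \in W\cup X\cup Y$ has its preference list split into two tiers: a \emph{top tier} consisting of pairs corresponding to edges of $E$ incident to $v$ (at most three such pairs, since the instance is $3$DM-$3$), and a \emph{bottom tier} containing all remaining pairs in some fixed order. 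The within-tier order at the top of each list would be dictated by an ``edge ownership'' gadget implemented through the auxiliary players.

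For completeness, a perfect matching $M^\ast \subseteq E$ yields a marriage of $(A, B, D)$ by matching each original player along its $M^\ast$-edge and assigning the auxiliary players as dictated by the gadgets. Since every original player is then matched via a top-tier pair and every non-edge triple is bottom-tier for each of its three participants, no non-edge triple can block. The ownership scheme is designed so that any edge $e \in E \setminus M^\ast$ also fails to block because its designated owner strictly prefers its $M^\ast$-assignment to $e$; hence $\MSM(P) = n^3$ and $\MSS(P) = n$.

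For soundness, I would show that if $(W,X,Y,E)$ admits no matching of size exceeding $c_0 m$, then every marriage $M$ satisfies $\stab(M) \le c\, n^3$ and every stable submarriage $S$ satisfies $\abs{S} \le c n$, for some constant $c = c(c_0) < 1$. The key step is to extract from any candidate marriage $M$ (or stable submarriage $S$) a set $M' \subseteq E$ consisting of those triples whose three participants are all matched along a common edge of $E$. Because triples of $M$ are pairwise disjoint, $M'$ is automatically a $3$DM matching and so has size at most $c_0 m$. The remaining $\Omega(n)$ ``unaligned'' players then participate in $\Omega(n^3)$ blocking triples---each unused edge of $E$ combined with arbitrary bottom-tier third parties yields one---giving the gap for \TGMSM; the same extraction directly bounds $\abs{S}$ in the \TGMSS\ case.

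The main obstacle is engineering the ownership gadget at the top tier. Because each vertex may lie in up to three edges and the reduction must work uniformly over every possible perfect matching through that vertex, a naive canonical ordering of edges at each player's top tier can easily create cyclic preference violations among top-tier pairs that block the intended marriage. Introducing small auxiliary gadget players per edge, whose preferences enforce a consistent notion of ownership and absorb the blocking potential of top-tier edges left unused by $M^\ast$, is where the real work lies; verifying that the gadget achieves both completeness and soundness while keeping $n$ polynomial in $m$ is the crucial calculation.
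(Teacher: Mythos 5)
Your high-level architecture (reduce from \GTDMT{c_0}, top-tier/bottom-tier preference lists, completeness via a perfect matching, soundness via the extracted sub-matching $M'$) matches the paper's, and your worry about the top-tier ``ownership'' gadget is legitimate --- the paper resolves it by tripling each $W$-vertex into copies $a_i[1],a_i[2],a_i[3]$ with dummy partners $w_i x_i$ and $y_i z_i$ that absorb the two copies not used by the matching. But there is a genuine gap in your soundness argument, and it is the heart of the proof. You claim the $\Omega(n)$ unaligned players ``participate in $\Omega(n^3)$ blocking triples --- each unused edge of $E$ combined with arbitrary bottom-tier third parties yields one.'' A blocking triple is a single triple of players, and an unused edge of $E$ already names all three of its participants, so this accounting yields at most $\abs{E} = O(m) = O(n)$ blocking triples, which is $o(n^3)$ and establishes only $\MSM(P) \le n^3 - O(n)$, nowhere near the required $\MSM(P) \le c\,n^3$. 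Worse, if the bottom tiers are ``all remaining pairs in some fixed order,'' the unaligned players may admit a perfectly stable marriage among themselves (e.g.\ if the fixed order is induced by a global ranking, a serial-dictatorship marriage is stable), so no constant-factor gap survives. The same problem defeats your direct bound on $\abs{S}$ for \TGMSS: a large stable submarriage could simply marry the unaligned players within their (unstructured) bottom tiers.

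The paper's missing ingredient is Theorem~\ref{thm:msm}: there is an explicit two-block preference pattern ($A = A_1 \cup A_2$, etc.) under which \emph{every} marriage has at least $n^3/128$ unstable triples. The embedding installs exactly this pattern into the bottom tiers ($B^1D^1$ then $B^2D^2$ for $A^1$, $B^2D^1$ for $A^2$, and so on), so that once $\Omega(n)$ women are forced below their top three choices (which is what $\opt(I)\le c_0 m$ buys you), those players and their partners reproduce the Theorem~\ref{thm:msm} configuration and necessarily induce $\Omega(n^3)$ blocking triples. You need some such ``robustly unstable'' bottom-tier structure; arbitrary orderings will not do. Finally, for \TGMSS\ the paper does not argue directly but observes that any stable submarriage of size $>(1-\e)n$ extends to a full marriage with at most $3\e n^3$ blocking triples, so \GTGMSS{(1-\e)} hardness follows formally from \GTGMSM{(1-3\e)} hardness; this generic step is cleaner than re-running the extraction argument and you may want to adopt it.
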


  \begin{thm}
    \label{thm:main-3sr-lb}
    There exists an absolute constant $c < 1$ such that Gap$_c$-\TPSA-MSM and Gap$_c$-\TPSA-MSS are NP-hard.
  \end{thm}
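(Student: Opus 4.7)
The plan is a gap-preserving reduction from \TGSM\ to \TPSA, leveraging Theorem~\ref{thm:main-3gsm-lb} together with Remark~\ref{rem:3sr}. That remark identifies \TGSM\ with the special case of \TPSA\ in which each player's only acceptable triples are the ``proper'' ones consisting of one woman, one man, and one dog. To convert a \TGSM\ instance $(A, B, D, P)$ to a \TPSA\ instance with fully specified preferences, I take $U = A \cup B \cup D$ and define preferences $P'$ in which each player $v$ first ranks proper triples containing $v$ in the order given by $P$ and then ranks all improper triples in an arbitrary but fixed order below. Completeness is immediate: any stable marriage for $P$ remains a stable matching under $P'$, since each potentially blocking improper triple is dominated, for all three of its members, by the proper triple already assigned to them.

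For soundness of the \TPSAMSM\ reduction, I would take an arbitrary matching $N$ in the $P'$ instance and show $\ins(N) \geq (1 - c) n^3$, where $c$ is the constant from Theorem~\ref{thm:main-3gsm-lb}. Let $k$ be the number of improper triples of $N$. A gender-counting argument, using that each of the $n - k$ proper triples of $N$ uses exactly one player from each gender and each gender has total size $n$, forces the improper triples to contain exactly $k$ women, $k$ men, and $k$ dogs. Re-pairing these $3k$ players arbitrarily into $k$ proper triples yields a legitimate \TGSM\ marriage $N'$. The key observation is that every proper triple $abd$ unstable in $N'$ under $P$ is also unstable in $N$ under $P'$: members of $abd$ whose $N$-triple was proper keep the same triple in $N'$, while members whose $N$-triple was improper automatically prefer the proper triple $abd$ to their current one. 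Hence $\ins(N) \geq \ins(N') \geq (1 - c) n^3$. Since $\binom{3n}{3} = \Theta(n^3)$, this produces a gap constant $c'_{\mathrm{MSM}} < 1$.

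For \TPSAMSS, I would analyze an arbitrary stable submatching $S$ under $P'$ with $p$ proper and $k$ improper triples. The proper triples of $S$ by themselves form a stable submarriage for $P$, because any proper triple that blocks this subfamily in the \TGSM\ sense also blocks $S$ in \TPSA\ (preferences on proper triples agree in $P$ and $P'$); consequently $p \leq \MSS(P) \leq c n$. Moreover, if the improper triples of $S$ together touched all three of $A$, $B$, $D$, then selecting one player from each gender among them would yield a proper triple whose three members each prefer it to their current improper triple, blocking $S$. So the improper triples must be confined to players of at most two genders, and at most $n - p$ players of each gender are available, giving $3k \leq 2(n - p)$. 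Combining, $\abs{S} = p + k \leq (2n + p)/3 \leq (2 + c) n / 3$, which is strictly less than $n$, yielding the desired gap.

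The step I expect to be the most delicate is the \TPSAMSS\ analysis: whereas the MSM case lets me rearrange improper triples into proper ones ``for free'' and charge the resulting instability back to $P$, the MSS argument must genuinely rule out large stable submatchings that legitimately include improper triples. The two-gender confinement observation above is the crux and depends on the preference design making every proper triple strictly preferable to every improper one.
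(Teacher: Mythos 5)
Your construction is exactly the one the paper uses: view $U = A \cup B \cup D$ as a \TPSA\ instance and complete each player's preference list by appending all improper pairs below the proper ones. The paper's own proof, however, is only a two-line sketch---it asserts that ``analogues of Theorem \ref{thm:msm} and Lemma \ref{lem:embedding} hold'' and leaves the details to the reader---so the real content of your proposal is the analysis, and there your route genuinely differs in one place. For \TPSAMSM\ you reduce in a black-box way from the \GTGMSM{c}\ hardness already established: the gender-counting step (a perfect matching with $k$ improper triples strands exactly $k$ players of each gender, which can be re-paired into proper families) and the observation that instability only transfers downward are both correct, and this is cleaner than re-tracing the $3$DM-$3$ embedding as the paper implicitly does. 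For \TPSAMSS\ you depart from the paper more substantially: the paper's template would derive MSS hardness from MSM hardness via the padding argument used in the proof of Theorem \ref{thm:main-3gsm-lb} (extend a large stable submatching to a matching, charge at most $3\binom{3n}{2}$ new blocking triples per added triple), whereas you give a direct structural bound: the proper part of a stable submatching is a stable submarriage for $P$, and the improper triples must be confined to two genders or else any cross-gender selection blocks, giving $\abs{S} \leq (2 + c)n/3$. Both arguments work; yours buys an explicit quantitative bound on $\MSS(P')$ and avoids routing through the MSM gap a second time, at the cost of relying crucially on the design choice that every proper pair is ranked above every improper pair (which you correctly flag as the crux). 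I find no gap in either branch.
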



  \subsection{Preferences for \TGSM\ with Many Unstable Triples}
  \label{sec:many-blockers}

  \begin{thm}
    \label{thm:msm}
    There exist preferences $P$ for \TGSM\ and a constant $c < 1$ for which $\MSM(P) \leq c n^3$.
  \end{thm}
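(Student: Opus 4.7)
The plan is to prove the statement by directly invoking an existing impossibility result, and then to discuss the natural quantitative strengthening. The short proof is the following: by Alkan's result \cite{Alkan88}, there exist preferences $P^*$ on three women, three men, and three dogs for which no marriage is stable. For such $P^*$, every marriage $M$ contains at least one unstable triple, so $\stab(M) \le n_0^3 - 1 = 26$ with $n_0 = 3$, giving $\MSM(P^*) \le \tfrac{26}{27} n_0^3$. Setting $c = 26/27$ and $n = n_0$ proves the theorem as stated.

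To present this as a self-contained argument I would write down Alkan's preferences explicitly, quotient by the natural symmetries acting on players of each gender to reduce the number of cases, and in each equivalence class exhibit a specific blocking triple. This keeps the proof elementary and fully explicit, at the cost of a small amount of case analysis.

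The more substantive form of the theorem, and the one I expect is the real target given the approximability discussion that follows, is to take $c$ bounded away from $1$ uniformly in $n$. This does \emph{not} follow from Alkan, since disjoint copies of $P^*$ only yield $\ins(M) = \Omega(n)$, not $\Omega(n^3)$. A stronger construction would need $\Omega(n^3)$ candidate blocker triples that every marriage fails to stabilize simultaneously. A natural candidate is to declare each player's ``top set'' to consist of low-index-sum pairs --- e.g., $a_i$ top-ranks $\{(b_j,d_k) : i + j + k < n\}$, and analogously for men and dogs --- so that the identity $\sum_{(i,j,k) \in M} (i+j+k) = 3 \cdot n(n+1)/2$ forces a constant fraction of families in any marriage to have all three members matched outside their top sets, from which one extracts $\Omega(n^3)$ blockers drawn from the top region. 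The main obstacle lies in designing the tie-breaking rule \emph{within} each top set so that the symmetry across genders is preserved and so that a constant fraction of the candidate blockers genuinely block every marriage; a careless tie-breaker can inadvertently stabilize most of them.
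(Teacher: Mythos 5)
Your short proof establishes only a degenerate reading of the statement: a single fixed-size instance ($n_0 = 3$, or in fact $n_0=2$ suffices) with at least one forced unstable triple. That is not the form in which the paper uses Theorem \ref{thm:msm}. In the proof of Lemma \ref{lem:embedding} the theorem is invoked quantitatively --- ``any marriage $M$ among these players induces at least $\abs{A'}^3/128$ blocking triples'' --- so what is actually needed is a preference \emph{family}, one for every (even) $n$, with $\ins(M) \ge n^3/128$ for every marriage $M$. You correctly diagnose this yourself, and your observation that disjoint copies of a fixed gadget only yield $\Omega(n)$ blockers is exactly right. But your proposed replacement (ranking by index sums, with an unresolved tie-breaking issue) is a sketch of a different and unverified construction; as you concede, a careless tie-breaker could stabilize most of the candidate blockers, and you do not resolve this. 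So the substantive claim is not proved.

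The idea you are missing is the paper's blow-up of the small gadget. Rather than taking disjoint copies, the paper starts from a $2\times 2\times 2$ instance with no stable marriage and replaces each of the six players by a block of $n/2$ players ($A = A_1 \cup A_2$, etc.), lifting the gadget's preferences to \emph{block preferences}: e.g.\ every $a_1 \in A_1$ prefers all of $B_1 \times D_1$ to all of $B_2 \times D_2$ to everything else, with arbitrary order inside each block. Tie-breaking within blocks is then irrelevant, because instability is certified at the block level: whenever a constant fraction of the players in three mutually-top blocks are all matched outside those blocks, \emph{every} cross triple among them is unstable, which immediately yields $\Omega(n)\cdot\Omega(n)\cdot\Omega(n) = \Omega(n^3)$ blockers. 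The proof is then a two-case counting argument on $\abs{M \cap (A_1\times B_1\times D_1)}$ showing that any marriage with fewer than $n^3/128$ unstable triples leads to a contradiction. This block-lifting device is what converts a qualitative nonexistence gadget into the uniform $\Omega(n^3)$ instability bound your approach stops short of.
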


  We describe preferences $P$ for which every marriage has $\Omega(n^3)$ blocking triples below. Assuming $n$ is even, we partition each gender into two equal sized sets $A = A_1 \cup A_2$, $B = B_1 \cup B_2$, and $D = D_1 \cup D_2$
    \begin{center}
      \begin{tabular}{l|rrr}
      player & \multicolumn{3}{c}{preferences}\\
      \hline
      $a_1 \in A_1$ & $B_1 D_1$ & $B_2 D_2$ & $\cdots$\\
      $a_2 \in A_2$ & $B_2 D_1$ & $\cdots$\\
      $b_1 \in B_1$ & $A_1 D_1$ & $\cdots$\\
      $b_2 \in B_2$ & $A_1 D_2$ & $A_2 D_1$ & $\cdots$\\
      $d_1 \in D_1$ & $A_2 B_2$ & $A_1 D_1$ & $\cdots$\\
      $d_2 \in D_2$ & $A_1 B_2$ & $\cdots$
    \end{tabular}
    \end{center}
    The sets appearing in the preferences indicate that the player prefers all pairs in that set (in any order) followed by the remaining preferences. For example, all $a_1 \in A_1$ prefer all partners $bd \in B_1 \times D_1$, followed by all partners in $B_2 \times D_2$, followed by all other pairs in arbitrary order. Within $B_1 \times D_1$ and $B_2 \times D_2$, $a_1$'s preferences are arbitrary. The full proof of Theorem \ref{thm:msm} is given in Appendix \ref{sec:unstable}.


  \subsection{The Embedding}
  \label{sec:embedding}

  We now describe an embedding of $3$DM-$3$ into \TGMSM. Our embedding is a modification of the embedding described by Ng and Hirschberg \cite{NH91}. Let $I$ be an instance of $3$DM-$3$ with ground sets  $W, X, Y$ and edge set $E$. We assume $\abs{W} = \abs{X} = \abs{Y} = m$. We will construct an instance $f(I)$ of \TGMSM\ with sets $A, B$ and $D$ of women, men and dogs of size $n = 6 m$ and suitable preferences $P$. We divide each gender into two sets $A = A^1 \cup A^2$, $B = B^1 \cup B^2$ and $D = D^1 \cup D^2$ where $\abs{A^j} = \abs{B^j} = \abs{D^j} = 3 m$ for $j = 1, 2$. Let $W = \set{a_1, a_2, \ldots, a_m}$, $X = \set{b_1, b_2, \ldots, b_m}$ and $Y = \set{d_1, d_2, \ldots, d_m}$, and denote 
  \[
  E = \bigcup_{i = 1}^n \set{a_i b_{k_1} d_{\ell_1},\ a_i b_{k_2} d_{\ell_2},\ a_i b_{k_3} d_{\ell_3}}.
  \]
  Without loss of generality, we assume each $a_i$ is contained in exactly $3$ edges by possibly increasing the multiplicity of edges containing $a_i$. For $j = 1, 2$, we form sets
  \begin{align*}
    A^j = \set{a_i^j[k] \st i \in [n],\ k \in [3]}, \quad B^j = \set{b_i^j,\ w_i^j,\ y_i^j \st i \in [n]}, \quad D^j &= \set{d_i^j,\ x_i^j,\ z_i^j \st i \in [n]}
  \end{align*}
  for $j = 1, 2$. We now define preferences for each set of players, beginning with those in $A$.
  \[
  \begin{array}{l|cccccc}
    a_i^1[m] & w_i^1 x_i^1 & y_i^1 z_i^1 & b_{k_m}^1 d_{\ell_m}^1 & B^1 D^1 & B^2 D^2 & \cdots\\[5pt]
    a_i^2[m] & w_i^2 x_i^2 & y_i^2 z_i^2 & b_{k_m}^2 d_{\ell_m}^2 & B^2 D^1 & \cdots & 
  \end{array}
  \]
  The players in $B$ have preferences given by
  \[
  \begin{array}{l|cccccc}
    w_i^1 & a_i^1[1] x_i^1 & a_i^1[2] x_i^1 & a_i^1[3] x_i^1 & A^1 D^1 & \cdots\\[5pt]
    y_i^1 & a_i^1[1] z_i^1 & a_i^1[2] z_i^1 & a_i^1[3] z_i^1 & A^1 D^1 & \cdots\\[5pt]
    b_i^1 & A^1 D^1 & \cdots\\[5pt]
    w_i^2 & a_i^2[1] x_i^2 & a_i^2[2] x_i^2 & a_i^2[3] x_i^2 & A^1 D^2 & A^2 D^1 & \cdots\\[5pt]
    y_i^2 & a_i^2[1] z_i^2 & a_i^2[2] z_i^2 & a_i^2[3] z_i^2 & A^1 D^2 & A^2 D^1 & \cdots\\[5pt]
    b_i^2 & A^1 D^2 & A^2 D^1 & \cdots
  \end{array}
  \]
  The preferences for $D$ are given by
  \[
  \begin{array}{l|cccccc}
    x_i^1 & a_i^1[3] w_i^1 & a_i^1[2] w_i^1 & a_i^1[3] w_i^1 & A^2 B^2 & A^1 B^1 &\cdots\\[5pt] 
    z_i^1 & a_i^1[3] y_i^1 & a_i^1[2] y_i^1 & a_i^1[3] y_i^1 & A^2 B^2 & A^1 B^1 & \cdots \\[5pt] 
    d_i^1 & A^2 B^2 & A^1 B^1 & \cdots \\[5pt]
    x_i^2 & a_i^2[3] w_i^2 & a_i^2[2] w_i^2 & a_i^2[3] w_i^2 & A^1 B^2 & \cdots \\[5pt] 
    z_i^2 & a_i^2[3] y_i^2 & a_i^2[2] y_i^2 & a_i^2[3] y_i^2 & A^1 B^2 & \cdots \\[5pt] 
    d_i^2 & A^1 B^2 & \cdots
  \end{array}
  \]
  The sets $A^j$, $B^j$ and $D^j$ in the preferences described above indicate that all players in these sets appear consecutively in some arbitrary order in the preferences. Ellipses indicate that all remaining preferences may be completed arbitrarily. For example, $a_1^1[1]$ most prefers $w_1^1 x_1^1$, followed by $y_1^1 z_1^1$ and $b_{k_m}^1 d_{\ell_m}^1$. She then prefers all remaining pairs in $B^1 D^1$ in any order, followed by all pairs in $B^2 D^2$, followed by the remaining pairs in any order.

  \begin{lem}
  The embedding $f :$ $3$DM-$3\too 3$GSM described above satisfies
    \begin{enumerate}
    \item If $\opt(I) = m$---that is, $I$ admits a perfect matching---then $f(I)$ admits a stable marriage (i.e. $\MSM(P) = n^3$).
    \item If $\opt(I) \leq c m$ for some $c < 1$, then there exists a constant $c' < 1$ depending only on $c$ such that $\MSM(P) \leq c' n^3$.
    \end{enumerate}
    \label{lem:embedding}
  \end{lem}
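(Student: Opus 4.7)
The plan is to handle the two parts of Lemma \ref{lem:embedding} separately.

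For part (1), given a perfect matching $M \subseteq E$ in $I$, I construct an explicit marriage $M^*$ of $f(I)$ as follows. For each $a_i \in W$ and each $j \in \{1,2\}$, let $k^*(i)$ be the local index in $[3]$ of the unique edge of $M$ containing $a_i$. Include in $M^*$ the ``edge family'' $(a_i^j[k^*(i)], b_{k_{k^*(i)}}^j, d_{\ell_{k^*(i)}}^j)$ together with the auxiliary families $(a_i^j[s_1], w_i^j, x_i^j)$ and $(a_i^j[s_2], y_i^j, z_i^j)$, where $\{s_1, s_2\} = \{1,2,3\} \setminus \{k^*(i)\}$ and $s_1 < s_2$ is chosen so that $w_i^j$ gets the lower-indexed leftover copy (aligning with $w_i^j$'s preference $a_i^j[1] \succ a_i^j[2] \succ a_i^j[3]$) while $x_i^j$ gets the complementary assignment. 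Since $M$ is perfect, every $b_k^j$ and $d_\ell^j$ lies in exactly one edge family, so $M^*$ is a valid marriage. Stability is verified by a case analysis over potential unstable triples, partitioned into (a) triples internal to the ``block'' of a single $a_i$ in one copy $j$, (b) triples within a single copy $j$ but spanning two distinct indices $i \neq i'$, and (c) triples spanning copies $j=1$ and $j=2$. Class (a) reduces to a finite local check against the listed top preferences; in each candidate triple at least one of the players $a_i^j[k], w_i^j, x_i^j, y_i^j, z_i^j$ is at or near the top of their list under $M^*$ and refuses to deviate. For classes (b) and (c), the $a_i^j[k]$-players have their entire $M^*$-match lying within their top three preferences, so any cross-index or cross-copy candidate triple — drawn from a ``bulk'' tail such as $B^j D^{j'}$ — is strictly worse for some $a$-player than their $M^*$-family; the symmetric checks from the man and dog sides use that only the top three pairs of each $a_i^j[k]$ are edge pairs, so cross-copy blocking triples never satisfy all three players simultaneously.

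For part (2), I argue the contrapositive: assuming $\opt(I) \leq cm$, I show that every marriage $M^*$ of $f(I)$ satisfies $\ins(M^*) \geq (1-c') n^3$ for some $c' < 1$ depending only on $c$. Call $a_i$ \emph{aligned in copy $j=1$} if the three families of $M^*$ containing $a_i^1[1], a_i^1[2], a_i^1[3]$ have exactly the canonical form: one ``edge family'' $(a_i^1[k], b_{k'}^1, d_{\ell'}^1)$ with $(a_i, b_{k'}, d_{\ell'}) \in E$, together with the two auxiliary families $(\cdot, w_i^1, x_i^1)$ and $(\cdot, y_i^1, z_i^1)$. Each aligned $a_i$ in copy $1$ then selects an edge of $E$, and because each $b_{k'}^1$ and $d_{\ell'}^1$ appears in only one family of $M^*$, the selected edges are pairwise disjoint and thus form a matching in $I$. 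By hypothesis this matching has size at most $cm$, so at least $(1-c)m$ of the $a_i$'s are non-aligned in copy $1$. The key claim is that each non-aligned $a_i$ can be charged $\Omega(n^2)$ distinct unstable triples of $M^*$, yielding $\ins(M^*) \geq (1-c) m \cdot \Omega(n^2) = \Omega(n^3)$, hence $\MSM(P) \leq c' n^3$ for some $c' < 1$ determined by $c$.

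The main obstacle is establishing the per-player charge of $\Omega(n^2)$ unstable triples. A direct argument, counting only blocking triples formed from a non-aligned player's top three preferences, gives merely $O(1)$ per defect. To amplify to $\Omega(n^2)$ one must exploit that the preference lists have \emph{bulk} tails (entire product sets such as $B^1 D^1$, each of size $\Theta(n^2)$) ordered consistently with the natural cross-copy structure from Section \ref{sec:many-blockers}: when a copy $a_i^1[k]$ is displaced deep into the bulk — which is forced whenever $a_i$ is non-aligned in a strong sense — an $\Omega(n^2)$-size collection of pairs $(b, d)$ strictly beats $a_i^1[k]$'s current match, and a bulk-level averaging argument (closely mirroring the construction behind Theorem \ref{thm:msm}) is required to show that a positive constant fraction of the candidate triples $(a_i^1[k], b, d)$ simultaneously satisfy the deviation criterion from the $b$-side and $d$-side. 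Making this rigorous while disjointifying charges across distinct non-aligned $a_i$'s to avoid double counting is the delicate step that ultimately drives the entire gap reduction.
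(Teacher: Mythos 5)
Your Part (1) matches the paper's: the same marriage (edge copy married to the matched edge pair, the two leftover copies to $w_i^j x_i^j$ and $y_i^j z_i^j$), with stability checked by inspection; the exact assignment of the two leftover copies is immaterial. Your first step of Part (2) also matches: at least $(1-c)m$ women per copy must be displaced from their top three choices, since the two auxiliary pairs absorb at most $2m$ copies of the $a_i$'s and the edge pairs used by top-three-matched women are disjoint edges of $E$, hence at most $cm$ of them.

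The genuine gap is the second half of Part (2), which you explicitly leave open. Your plan --- charge each non-aligned $a_i$ with $\Omega(n^2)$ unstable triples and disjointify --- cannot be carried out player-by-player: whether a candidate triple $a_i^1[k]\, b\, d$ blocks depends on the assignments of $b$ and $d$, and a single displaced woman may well lie in $O(1)$ or even zero blocking triples if her would-be partners happen to be content. The $\Omega(n^3)$ instability is an inherently \emph{global} property of the bulk preference structure, and the paper obtains it not by charging but by invoking Theorem \ref{thm:msm}: the set $A'$ of displaced women together with their partners $B'$ and $D'$ forms a sub-instance whose relative preferences (the tails $B^1D^1,\ B^2D^2$ for $A^1$, $B^2D^1$ for $A^2$, and the corresponding tails on the $B$- and $D$-sides) reproduce exactly the gadget of Section \ref{sec:many-blockers}, so any marriage restricted to these players already has at least $\abs{A'}^3/128$ blocking triples, giving $\MSM(P) \leq \bigl(1 - (1-c)^3/3456\bigr)n^3$ in one stroke. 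The proof of Theorem \ref{thm:msm} itself proceeds by a global case analysis and contradiction, not by local charges, which is precisely why your ``bulk-level averaging argument'' needs to be routed through that theorem rather than reconstructed per defect. As written, your proposal names the obstacle but does not surmount it, so the reduction is not yet established.
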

  \begin{proof}
    To prove the first claim assume, without loss of generality, that $M' = \set{a_i b_{k_1} d_{\ell_1} \st i \in [n]}$ is a perfect matching in $E$. It is easy to verify the marriage
    \[
    M = \set{a_i^j[1] b_{k_1}^j d_{\ell_1}^j} \cup \set{a_i^j[2] w_i^j x_i^j} \cup \set{a_i^j[3] y_i^j z_i^j}
    \]
    contains no blocking triples, hence is a stable marriage.

    For the second claim, let $M$ be an arbitrary marriage in $A \times B \times D$. We observe that there are at least $(1 - c) m$ players $a^1 \in A^1$ and $(1 - c) m$ players $a^2 \in A^2$ that are not matched with pairs from their top three choices. Suppose to the contrary that $\alpha > (2 + c) m$ players $a^1 \in A^1$ are matched with their top $3$ choices. This implies that more than $c m$ women $a^1 \in A^1$ are matched in triples of the form $a^1 b^1_{k} d^1_{\ell}$ with $a b_k d_\ell \in E$, implying that $E$ contains a matching of size $\alpha - 2 m > c m$, a contradiction. Thus at least $2 (1 -c) m$ women in $A^1 \cup A^2$ are matched below their top three choices.

    Let $A'$ denote the set of women matched below their top three choices, and $B'$ and $D'$ the sets of partners of $a \in A'$ in $M$. By the previous paragraph, $\abs{A'} \geq 2 (1 - c)m = (1 - c)m/6$. Further, the relative preferences of players in $A'$, $B'$ and $D'$ are precisely those described in Theorem \ref{thm:msm}. Thus, by Theorem $\ref{thm:msm}$, any marriage $M$ among these players induces at least $\abs{A'}^3 / 128$ blocking triples. Hence $M$ must contain at least 
    \[
    \frac{\abs{A'}}{128} \geq \frac{(1 - c)^3}{3456} n^3
    \]
    blocking triples.
  \end{proof}

  \begin{proof}[Proof of Theorem \ref{thm:main-3gsm-lb}]
    The reduction $f :$ $3$DM-$3\too 3$GSM is easily seen to be polynomial time computable. Thus, by Lemma \ref{lem:embedding}, $f$ is a polynomial time reduction from \GTDMT{c}\ to \GTGMSM{c'} where $c' = 1 - (1 - c)^3/3456$. The NP hardness of \GTGMSM{c}\ is then an immediate consequence of Theorem \ref{thm:gap-3dm-3-hardness}. 

    The hardness of \GTGMSS{c}\ is a consequence of the hardness \GTGMSM{c}. Consider an instance of \TGSM\ with preferences $P$. We make the following observations.
    \begin{enumerate}
    \item $\MSM(P) = n^3$ if and only if $\MSS(P) = n$.
    \item If $\MSM(P) \leq (1 - 3 \e) n^3$ for $\e > 0$, then $\MSS(P) \leq (1 - \e) n$.
    \end{enumerate}
    The first observation is clear. To prove the second, suppose that $\MSS(P) > (1 - \e) n$, and let $S$ be a maximum stable submarriage. We can form a marriage $M$ by arbitrarily adding $\e n$ disjoint families to $S$. Since each new family can induce at most $3 n^2$ blocking triples, $M$ has at most $3 \e n^3$ blocking triples, hence $\MSM(P) > (1 - 3 \e) n^3$. The two observations above imply that any decider for \GTGMSS{(1 - \e)} is also a decider for \GTGMSM{(1 - 3 \e)}. Thus, the NP-hardness of \GTGMSM{c} immediately implies the analogous result for \GTGMSS{c}.
  \end{proof}

  A sketch of the proof of analogous lower bounds for $3$PSA is given in Appendix \ref{sec:gen-to-3sr}.


  \section{Approximation Algorithms}
  \label{sec:approximation}

  \subsection{\TGSM\ approximation}

  In this section, we describe a polynomial time approximation algorithm for MSM, thereby proving Theorem \ref{thm:amsm-informal}. Consider an instance of \TGSM\ with preferences $P$, and as before $A = \set{a_1, a_2, \ldots, a_n}$, $B = \set{b_1, b_2, \ldots, b_n}$, and $D = \set{d_1, d_2, \ldots, d_n}$. Given a triple $a_i b_j d_k$, we define its \dft{stable set} $S_{ijk}$ to be the set of (indices of) triples which cannot form unstable triples with $a_i b_j d_k$. Specifically, we have
  \begin{align*}
    S_{ijk} = &\set{\alpha \beta \delta \in [n]^3 \st b_\beta d_\delta \preceq_{a_i} b_j d_k,\ \alpha = i} \cup \set{\alpha \beta \delta \in [n]^3 \st a_\alpha d_\delta \preceq_{b_j} a_i d_k,\ \beta = j}\\
    &\cup \set{\alpha \beta \delta \in [n]^3 \st a_\alpha b_\beta \preceq_{d_k} a_i b_j,\ \delta = k}
  \end{align*}
  The idea of our algorithm is to greedily form families that maximize $\abs{S_{ijk}}$. Pseudocode is given in Algorithm \ref{alg:amsm}.

  \begin{algorithm}
    \caption{\AMSM$(A, B, D, P)$}
    \label{alg:amsm}
    \begin{algorithmic}
      \STATE find $ijk \in [n]^3$ which maximize $\abs{S_{ijk}}$
      \STATE $A' \leftarrow A \setminus \set{a_i}$, $B' \leftarrow B \setminus \set{b_j}$, $D' \leftarrow D \setminus \set{d_k}$
      \STATE $P' \leftarrow P$ restricted to $A'$, $B'$, and $D'$
      \RETURN $\set{a_i b_j d_k} \cup$\AMSM$(A', B', D', P')$
    \end{algorithmic}
  \end{algorithm}

  It is easy to see that \AMSM\ can be implemented in polynomial time. The naive algorithm for computing $\abs{S_{ijk}}$ for fixed $ijk \in [n]^3$ by iterating through all triples $\alpha\beta\delta \in [n]^3$ and querying each player's preferences can be implemented in time $\sOh(n^3)$. The maximal such $\abs{S_{ijk}}$ can then be found by iterating through all $ijk \in [n]^3$. Thus the first step in \AMSM\ can be accomplished in time $\sOh(n^6)$. Finally, the recursive step of \AMSM\ terminates after $n$ iterations, as each iteration decreases the size of $A$, $B$, and $D$ by one. 

  \begin{lem}
    \label{lem:good-edges}
    For any preferences $P$, and sets $A$, $B$ and $D$ with $\abs{A} = \abs{B} = \abs{D} = n$, there exists a triple $ijk \in [n]^3$ with 
    \begin{equation}
      \label{eqn:good-edges}
      \abs{S_{ijk}} \geq \frac{4n^2}{3} - n - 1.      
    \end{equation}
  \end{lem}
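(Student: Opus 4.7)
The plan is a double-counting argument. For a candidate triple indexed by $ijk$, decompose $S_{ijk} = X_a \cup X_b \cup X_d$, where $X_a$, $X_b$, $X_d$ are the three subsets on the right-hand side of the definition (corresponding to constraints from $a_i$, $b_j$, and $d_k$, respectively). Inclusion-exclusion yields
\[
\abs{S_{ijk}} \;\geq\; \abs{X_a} + \abs{X_b} + \abs{X_d} - \abs{X_a \cap X_b} - \abs{X_a \cap X_d} - \abs{X_b \cap X_d},
\]
which I would sum over all $n^3$ triples $ijk \in [n]^3$ and then invoke pigeonhole.

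For the three leading terms, fix $i$ and observe that $(j,k) \mapsto \rank_{a_i}(b_j d_k)$ is a bijection from $[n]^2$ to $[n^2]$, while $\abs{X_a(i,j,k)} = n^2 - \rank_{a_i}(b_j d_k) + 1$. Summing gives $\sum_{j,k}\abs{X_a(i,j,k)} = \sum_{r=1}^{n^2}(n^2 - r + 1) = n^2(n^2+1)/2$, hence $\sum_{ijk}\abs{X_a} = n^3(n^2+1)/2$. The identical identity holds for $X_b$ and $X_d$.

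For the three intersection terms, fix $(i,j)$. Elements of $X_a \cap X_b$ are parameterized by $\delta \in [n]$ satisfying two conditions; discarding one gives $\abs{X_a \cap X_b} \leq \abs{\set{\delta : b_j d_\delta \preceq_{a_i} b_j d_k}}$. As $k$ ranges over $[n]$ with $(i,j)$ fixed, this upper bound equals the position of $d_k$ from the bottom in $a_i$'s preference order restricted to the pairs $\set{b_j d_\ell}_{\ell \in [n]}$, and these positions sum to $n(n+1)/2$. Thus $\sum_{ijk}\abs{X_a \cap X_b} \leq n^3(n+1)/2$, with the analogous bound for the other two intersections by symmetry.

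Combining the two estimates, $\sum_{ijk}\abs{S_{ijk}} \geq \frac{3n^3(n^2+1)}{2} - \frac{3n^3(n+1)}{2} = \frac{3n^4(n-1)}{2}$, so by pigeonhole some triple has $\abs{S_{ijk}} \geq 3n(n-1)/2$. A direct calculation verifies $3n(n-1)/2 \geq 4n^2/3 - n - 1$ for every $n \geq 1$: the difference equals $(n^2 - 3n + 6)/6$, and the quadratic $n^2 - 3n + 6$ has discriminant $9 - 24 < 0$, hence is always positive. I do not anticipate any serious obstacle; the only subtlety is using the refined intersection bound, since the cruder estimate $\abs{X_a \cap X_b} \leq n$ combined with the same averaging only yields $3n^2/2 - 3n + 3/2$, which falls short of the stated bound for small $n$.
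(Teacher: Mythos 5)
Your proof is correct, and it takes a genuinely different route from the paper's. The paper argues by thresholding: it marks every triple $a_ib_jd_k$ for which one of the three players ranks the corresponding pair among their top $n^2/3+1$ choices, counts at least $n^3+3n$ marks on $n^3$ triples, and extracts a doubly-marked triple, for which two of the sets in the definition of $S_{ijk}$ each have size at least $2n^2/3-1$ and overlap in at most $n-1$ elements. You instead average $\abs{S_{ijk}}$ directly over all $n^3$ triples, using the Bonferroni lower bound and the exact identities $\sum_{ijk}\abs{X_a}=n^3(n^2+1)/2$ and $\sum_{ijk}\abs{X_a\cap X_b}\le n^3(n+1)/2$; your observation that the crude bound $\abs{X_a\cap X_b}\le n$ does not suffice for small $n$, and that one must instead sum the restricted ranks to get $n(n+1)/2$ per fixed $(i,j)$, is exactly the right subtlety. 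Your global averaging buys a quantitatively stronger conclusion, $\abs{S_{ijk}}\ge \tfrac32 n(n-1)$, versus the paper's $\tfrac43 n^2 - n - 1$; if propagated through the proof of Theorem \ref{thm:amsm-informal}, this would improve the approximation guarantee of \AMSM\ from $4/9$ to $1/2$. The paper's threshold argument is slightly more elementary in that it avoids computing exact rank sums, but it discards the contribution of the third player and of all triples below the threshold, which is where the constant is lost.
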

  \begin{proof}
    We will show that there exists a triple $a_i b_j d_k$ such that at least two of $a_i$, $b_j$, and $d_k$ respectively rank $b_j d_k$, $a_i d_k$, and $a_i b_j$ among their top $n^2 / 3 + 1$ choices. Note that this occurs precisely when at least two of the the following inequalities are satisfied
    \begin{align*}
    & \abs{\set{\beta\delta \in [n]^2 \st b_\beta d_\delta \succeq_{a_i} b_j d_k}} \leq \frac{n^2}{3} + 1, \quad \abs{\set{\alpha\delta \in [n]^2 \st a_\alpha d_\delta \succeq_{b_j} a_i d_k}} \leq \frac{n^2}{3} + 1,\\
    &\text{and}\quad \abs{\set{\alpha\beta \in [n]^2 \st a_\alpha b_\beta \succeq_{d_k} a_i b_j}} \leq \frac{n^2}{3} + 1
    \end{align*}
    Mark each triple $a_i b_j d_k$ which satisfies one of the above inequalities. Each $a_i$ induces $\frac{n^2}{3} + 1$ marks, so we get $\frac{n^3}{3} + n$ marks from all $a \in A$. Similarly, we get $\frac{n^3}{3} + n$ marks from $B$ and $D$. Thus, marks are placed on at least $n^3 + 3n$ triples. By the pigeonhole principle, at least one triple is marked twice.

    We claim that the triple $a_i b_j d_k$ satisfying two of the above inequalities satisfies equation~(\ref{eqn:good-edges}). Without loss of generality, assume that $a_i b_j d_k$ satisfies the first two equations. Thus, $a_i$ and $b_j$ must each contribute at least $\frac{2 n^3}{3} - 1$ stable triples with respect to $a_i b_j d_k$. Further, at most $n - 1$ such triples can be contributed by both $a_i$ and $b_j$, as such triples must be of the form $a_i b_j d_\delta$ for $\delta \neq k$. Thus (\ref{eqn:good-edges}) is satisfied, as desired.
  \end{proof}

  We are now ready to prove that \AMSM\ gives a constant factor approximation for the maximally stable marriage problem.

  \begin{proof}[Proof of Theorem \ref{thm:amsm-informal}]
    Let $M$ be the marriage found by \AMSM, and suppose
    \[
    M = \set{a_1 b_1 d_1,\ a_2 b_2 d_2,\ \ldots,\ a_n b_n d_n}
    \]
    where $a_1 b_1 d_1$ is the first triple found by \AMSM, $a_2 b_2 d_2$ is the second, et cetera. 
    By Lemma~\ref{lem:good-edges}, $\abs{S_{111}} \geq \frac{4}{3}n^2 - O(n)$. Therefore, the players $a_1$, $b_1$, and $d_1$ can be contained in at most $\frac{5}{3} n^2 + O(n)$ unstable triples in any marriage containing the family $a_1 b_1 d_1$. Similarly, for $1 \leq i \leq n$ the $i$th family $a_i b_i d_i$ can contribute at most $\frac{5}{3}(n - i + 1)^2 + O(n)$ new unstable triples (not containing any $a_j$, $b_j$, or $d_j$ for $j < i$). Thus, the total number of unstable triples in $M$ is at most
    \[
    \sum_{i = 1}^n \paren{\frac{5}{3}(n - i + 1)^2 + O(n)} = \frac{5}{9} n^3 + O(n^2).
    \]
    Thus, we have $\stab(M) \geq 4 n^3 / 9 - O(n^2)$ as desired.
  \end{proof}
  
  A proof of the analogous result for $3$PSA (Theorem~\ref{thm:asr-informal}) is given in Appendix \ref{sec:3psa-approx}.


  \section{Concluding Remarks and Open Questions}
  \label{sec:conclusion}

    While \AMSM\ gives a simple approximation algorithm for \TGMSM, we do not generalize this result to \TGMSS. Indeed, even the first two families output by \AMSM\ may include blocking triples. We leave the existence of an efficient approximation for \TGMSS\ as a tantalizing open question.

  \begin{open}
    Is it possible to efficiently compute a constant factor approximation to \TGMSS?
  \end{open}
  
  Finding an approximation algorithm for maximally stable marriage was made easier by the fact that any preferences admit a marriage/matching with $\Omega(n^3)$ stable triples. However, for \TGMSS, it is not clear whether every preference structure admits stable submarriages of size $\Omega(n)$. We feel that understanding the approximability of \TGMSS\ is a very intriguing avenue of further exploration.

  \begin{open}
    How small can a maximum stable submarriage/submatching be? What preferences achieve this bound?
  \end{open}

  In our hardness of approximation results (Theorems \ref{thm:main-3gsm-lb} and \ref{thm:main-3sr-lb}), we do not state explicit values of $c$ for which \GTGMSM{c}\ and \GTGMSS{c}\ (and the corresponding problems for three person stable assignment) are NP-complete. The value implied by our embedding of $3$SAT-$B$ via $3$DM-$3$ is quite close to $1$. It would be interesting to find a better (explicit) factor for hardness of approximation. Conversely, is it possible to efficiently achieve a better than $4/9$-factor approximation for maximally stable marriage/matching problems?

  \begin{open}
    For the maximally stable marriage/matching problems, close the gap between the $4/9$-factor approximation algorithm and the $(1-\e)$-factor hardness of approximation.
  \end{open}

  The preference structure described in the proof of Theorem \ref{thm:msm} (upon which our hardness of approximation results rely), there exist many quartets of pairs $b_1 d_1$, $b_2 d_2$, $b_1, d_3$, and $b_2 d_4$ such that $b_1 d_1 \succ_a b_2 d_2$ and $b_2 d_4 \succ_a b_1 d_3$. Thus $a$ does not consistently prefer pairs including $b_1$ to those including $b_2$ or vice versa. Ng and Hirschberg describe such preferences as \dft{inconsistent}, and asked whether consistent preferences always admit a ($3$ gender) stable marriage. Huang \cite{Huang07} showed that consistent preferences need not admit stable marriages, and indeed it is NP-complete to determine whether or not given consistent preferences admit a stable marriage. 

  \begin{open}
    Are MSM and MSS still hard to approximate if preferences are restricted to be consistent? 
  \end{open}

  \bibliographystyle{plain}
  \bibliography{3GSM}

\begin{thebibliography}{10}

\bibitem{Alkan88}
Ahmet Alkan.
\newblock {Nonexistence of stable threesome matchings}.
\newblock {\em Mathematical Social Sciences}, 16(2):207--209, 1988.

\bibitem{ALMSS98}
Sanjeev Arora, Carsten Lund, Rajeev Motwani, Madhu Sudan, and Mario Szegedy.
\newblock {Proof Verification and the Hardness of Approximation Problems}.
\newblock {\em J. ACM}, 45(3):501--555, 1998.

\bibitem{AS98}
Sanjeev Arora and Shmuel Safra.
\newblock {Probabilistic Checking of Proofs: A New Characterization of NP}.
\newblock {\em J. ACM}, 45(1):70--122, 1998.

\bibitem{Dinur07}
Irit Dinur.
\newblock {The PCP theorem by gap amplification}.
\newblock {\em Journal of the ACM (JACM)}, 54(3):12, 2007.

\bibitem{GS62}
D~Gale and L~S Shapley.
\newblock {College Admissions and the Stability of Marriage}.
\newblock {\em The American Mathematical Monthly}, 69(1):pp. 9--15, 1962.

\bibitem{GJ79}
Michael~R. Garey and David~S. Johnson.
\newblock {\em Computers and Intractability: A Guide to the Theory of
  NP-Completeness}.
\newblock W. H. Freeman \& Co., New York, NY, USA, 1979.

\bibitem{GI89}
Dan Gusfield and Robert~W Irving.
\newblock {\em {The stable marriage problem: structure and algorithms}},
  volume~54.
\newblock MIT press Cambridge, 1989.

\bibitem{Huang07}
Chien-Chung Huang.
\newblock {Two's Company, Three's a Crowd: Stable Family and Threesome
  Roommates Problems}.
\newblock In Lars Arge, Michael Hoffmann, and Emo Welzl, editors, {\em
  Algorithms -- ESA 2007}, pages 558--569. Springer Berlin Heidelberg, 2007.

\bibitem{IMO09}
Robert~W Irving, David~F Manlove, and Gregg O'Malley.
\newblock {Stable marriage with ties and bounded length preference lists}.
\newblock {\em Journal of Discrete Algorithms}, 7(2):213--219, 2009.

\bibitem{IMMM99}
Kazuo Iwama, Shuichi Miyazaki, Yasufumi Morita, and David Manlove.
\newblock {Stable marriage with incomplete lists and ties}.
\newblock In {\em Automata, Languages and Programming}, pages 443--452.
  Springer, 1999.

\bibitem{Kann91}
Viggo Kann.
\newblock {Maximum bounded 3-dimensional matching is MAX SNP-complete}.
\newblock {\em Information Processing Letters}, 37(1):27--35, 1991.

\bibitem{Kiraly11}
Zolt{\'a}n Kir{\'a}ly.
\newblock {Better and simpler approximation algorithms for the stable marriage
  problem}.
\newblock {\em Algorithmica}, 60(1):3--20, 2011.

\bibitem{Kiraly13}
Zolt{\'a}n Kir{\'a}ly.
\newblock {Linear time local approximation algorithm for maximum stable
  marriage}.
\newblock {\em Algorithms}, 6(3):471--484, 2013.

\bibitem{Knuth97}
Donald~Ervin Knuth.
\newblock {\em {Stable marriage and its relation to other combinatorial
  problems: An introduction to the mathematical analysis of algorithms}},
  volume~10.
\newblock American Mathematical Soc., 1997.

\bibitem{Manlove13}
David Manlove.
\newblock {\em {Algorithmics of matching under preferences}}.
\newblock World Scientific Publishing, 2013.

\bibitem{NH91}
Cheng Ng and Daniel~S Hirschberg.
\newblock {Three-Dimensional Stabl Matching Problems}.
\newblock {\em SIAM Journal on Discrete Mathematics}, 4(2):245--252, 1991.

\bibitem{Paluch14}
Katarzyna Paluch.
\newblock {Faster and simpler approximation of stable matchings}.
\newblock {\em Algorithms}, 7(2):189--202, 2014.

\bibitem{PY91}
Christos~H Papadimitriou and Mihalis Yannakakis.
\newblock {Optimization, approximation, and complexity classes}.
\newblock In {\em Proceedings of the Twentieth Annual ACM Symposium on Theory
  of Computing}, pages 425--440, New York, NY, USA, 1991. ACM.

\bibitem{RS92}
Alvin~E Roth and Marilda A~Oliveira Sotomayor.
\newblock {\em {Two-sided matching: A study in game-theoretic modeling and
  analysis}}.
\newblock Cambridge University Press, 1992.

\bibitem{Yanagisawa07}
Hiroki Yanagisawa.
\newblock {\em {Approximation algorithms for stable marriage problems}}.
\newblock PhD thesis, Kyoto University, Graduate School of Informatics, 2007.

\end{thebibliography}


  \appendix


  \section{Hardness of \GTDMT{c}}
  \label{sec:gap-hardness}

  The goal of this section is to prove Theorem \ref{thm:gap-3dm-3-hardness}, the NP hardness of \GTDMT{c}. In the first subsection, we review Kann's reduction from Max-$3$SAT-$B$ to Max-$3DM$-$3$. In the following subsection, we describe how to modify Kann's reduction in order to obtain Theorem \ref{thm:gap-3dm-3-hardness}. 

  \subsection{Kann's reduction} Let $I$ be an instance of $3$SAT-$B$. Specifically, $I$ consists of a set $U$ of $n$ Boolean variables,
  \[
  U = \set{u_1, u_2, \ldots, u_n},
  \]
  and a set $C$ of $m$ clauses
  \[
  C = \set{c_1, c_2, \ldots, c_m}.
  \]
  Each clause is a disjunction of at most $3$ literals, and each variable $u_i$ or its negation $\bar{u}_i$ appears in at most $B$ clauses. For each $i \in [n]$, let $d_i$ denote the number of clauses in which $u_i$ or $\bar{u}_i$ appears, where $d_i \leq B$. Kann's construction begins with the classical reduction from $3$SAT to $3$DM used to show the NP-completeness of $3$DM, as described in \cite{GJ79}. Each variable $u_i$ gets mapped to a \dft{ring} of $2 d_i$ edges. The points of the ring correspond alternatingly to $u_i$ and $\bar{u}_i$. A maximal matching on the ring corresponds to a choice a truth value of $u_i$: if the edges containing the vertices labeled $u_i$ are in the matching, this corresponds to $u_i$ having the value true; if the edges containing $\bar{u}_i$ are chosen, this corresponds to $u_i$ taking the value false. See Fig \ref{fig:ring}. 

  In the classical construction, the points of the ring corresponding to $u_i$ are connected to \dft{clause vertices} via \dft{clause edges} which encode the clauses in $C$. This is done in such a way that the formula $I$ is satisfiable if and only if the corresponding matching problem admits a perfect matching. The problem with this embedding, however, is that even if a relatively small fraction of clauses in $C$ can be simultaneously satisfied, the corresponding matching problem may still admit a nearly perfect matching. 

  To remedy this problem, Kann's reduction maps each Boolean variable $u_i$ to many rings. The rings are then connected via a tree structure whose roots correspond to instances of $u_i$ in the clauses of $C$. This tree structure imposes a predictable structure on the maximal matchings. 

  We denote the parameter
  \[
  K = 2^{\floor{\log(3 B / 2 + 1)}}
  \]
  which is the number of rings to which each variable $u_i$ maps. We denote the ``free elements''---the points on the rings---associate to $u_i$ by the variables
  \[
  v_i[\gamma, k] \text{ and } \bar{v}_i[\gamma, k] \quad\text{for}\quad 1 \leq \gamma \leq d_i,\ 1 \leq k \leq K.
  \]
  These vertices are connected to rings as in Figure \ref{fig:ring}. The rings are connected via \dft{tree edges} in $2 d_i$ binary trees, such that for each fixed $\gamma$, $v_i[\gamma, 1], v_i[\gamma, 2], \ldots, v_i[\gamma, K]$ are the leaves of a tree, and similarly for the $\bar{v_i}[\gamma, k]$. We label the root of this tree by $u_i[\gamma]$ or $\bar{u}_i[\gamma]$ depending on the labels of the leaves. See Figure \ref{fig:ring-of-trees}. We refer to the resulting structure for $u_i$ as the \dft{ring of trees} corresponding to $u_i$. 

	\begin{figure}[t]
          \begin{center}
            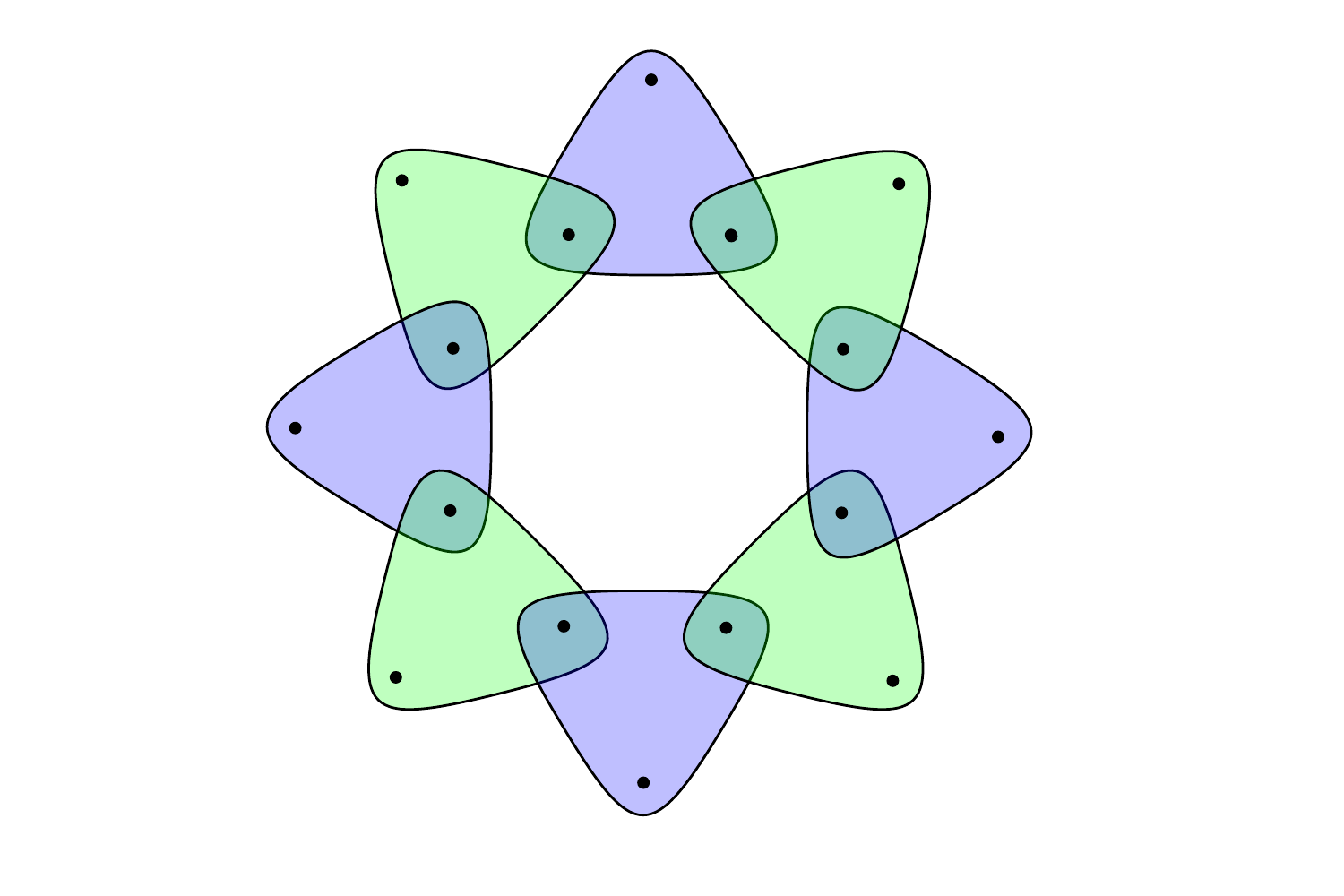    
          \end{center}
	  \caption{\label{fig:ring} The \dft{ring} structure for the embedding of $3$SAT-$B$ into $3$DM-$3$. The ring shown corresponds to a variable $u_i$ with $d_i = 4$. An optimal matching in the ring corresponds to a truth value of for the variable $u_i$: the blue edges correspond to the value \emph{true} while the green edges correspond to the value \emph{false}.}
	\end{figure}
	
	\begin{figure}[t]
          \begin{center}
	    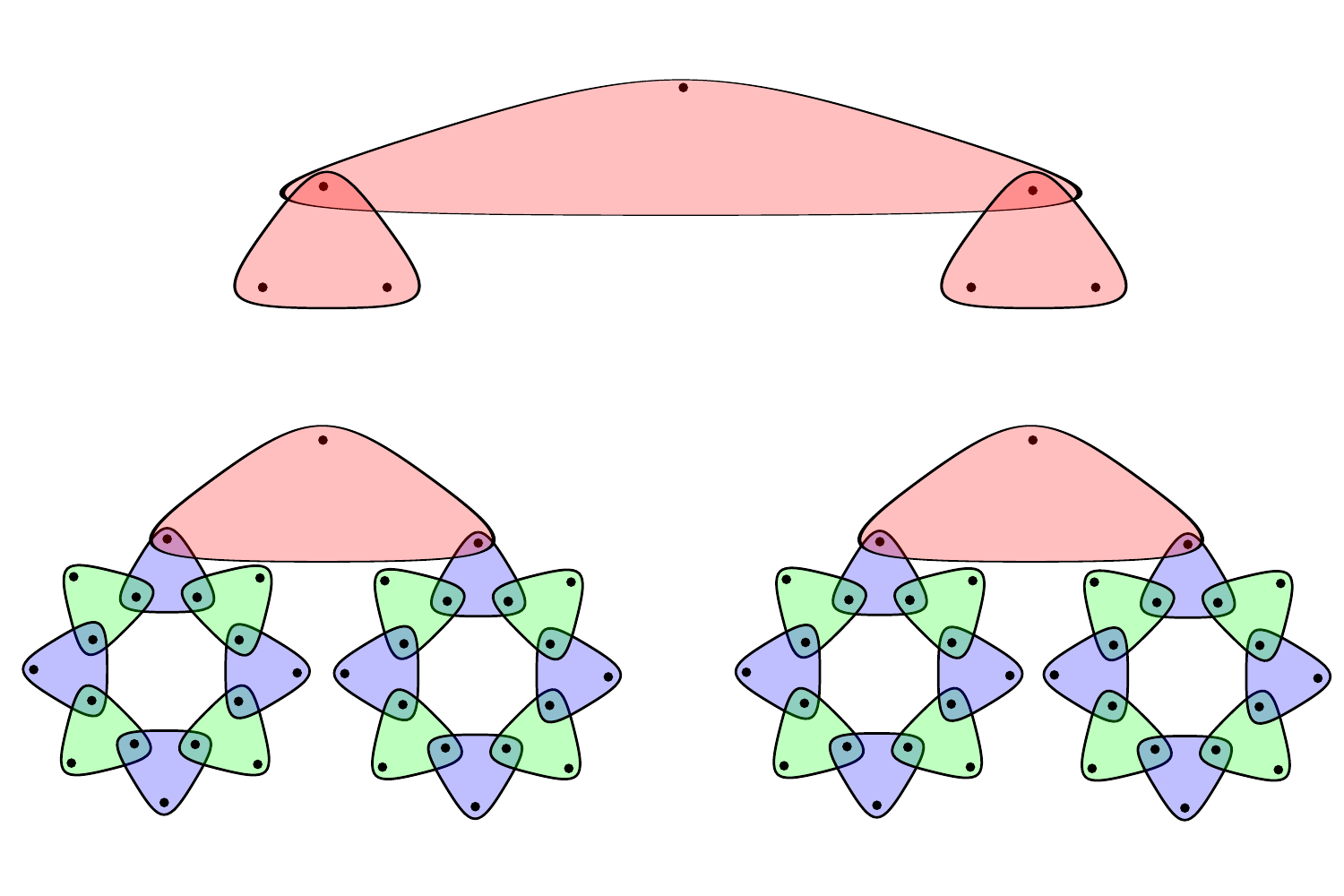            
          \end{center}
	  \caption{\label{fig:ring-of-trees} The \dft{ring of trees} structure for Kann's embedding of $3$SAT-$B$ into $3$DM-$3$. The red edges are \dft{tree edges}. In addition to the tree shown, the ring of trees contains (identical) trees for each $u_i[\gamma]$ and $\bar{u}_i[\gamma]$. In the example pictured, $\gamma$ ranges from $1$ to $4$. The \dft{root vertices} are labeled $u_i[1], \bar{u}_i[1], \ldots, u_i[d_i], \bar{u}_i[d_i]$ where $d_i$ is the number of occurrences of $u_i$ or its negation in $I$. As described in Lemma \ref{lem:opt-matchings}, an optimal matching in the ring of trees can be obtained by a consistent choice of all blue or green vertices in all the rings associated to $u_i$, then covering as many remaining vertices as possible with tree edges (a greedy ``leaf to root'' approach works). It is clear that such a matching will cover all vertices, except for half of the root vertices (those corresponding to either $u_i$ or $\bar{u}_i$).}
	\end{figure}

  The root vertices are connected via \dft{clause edges} to \dft{clause vertices}. For each $c_j \in C$, we associate two clause vertices $s_1[j]$ and $s_2[j]$. If $c_j$ is $u_i$'s $\gamma$-th clause in $C$, then we include the edge
  \[
  \set{u_i[\gamma], s_1[j], s_2[j]} \quad\text{or}\quad \set{\bar{u}_i[\gamma], s_1[j], s_2[j]}
  \]
  depending if $u_i$ or its negation appears in $c_j$ and the parity of the of the tree of rings. We denote the resulting instance of $3$DM by $f(I)$. It is readily apparent from this construction that $f(I)$ is in fact an instance of $3$DM-$3$: all vertices in the rings of trees are contained in exactly $2$ edges, while clause vertices are contained in at most $3$ edges. Further, the vertex set $V$ can be partitioned into a disjoint union $W \cup X \cup Y$ such that each edge contains one vertex from each of these sets. Kann makes the following observations about the structure of optimal (maximum) matchings in $f(I)$.

  \begin{lem}
    \label{lem:opt-matchings}
    Let $I$ be an instance of $3$SAT-$B$. Let $f(I)$ be an instance of $3$DM-$3$ constructed as above. Then each optimal matching $M$ in $f(I)$ is associated with an optimal assignment in $I$, and has the following structure.
    \begin{enumerate}
    \item For each variable $u_i$, $M$ contains either all edges containing $v_i[\gamma, k]$ or all the edges containing $\bar{v}_i[\gamma,  k]$, depending on the value $u_i$ in the optimal assignment for $I$.
    \item From each ring of trees, alternating tree edges are included in $M$ so as to cover all tree (and ring) vertices, except possibly root vertices.
    \item If $c_j$ is satisfied in the optimal assignment in $I$, then $M$ contains an edge containing $s_1[j]$ and $s_2[j]$.
    \item If $c_j$ is unsatisfied in the optimal assignment in $I$, then none of the edges containing $s_1[j]$ and $s_2[j]$ are contained in $M$.
    \end{enumerate}
    In particular, the only possible vertices left uncovered in an optimal matching are clause vertices corresponding to unsatisfied clauses and root vertices.
  \end{lem}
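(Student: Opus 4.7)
The plan is to prove the lemma by establishing the four structural claims in turn via a combination of local swap arguments and global vertex counting. The overarching strategy is to show that any deviation from the stated structure is either vertex-preserving (and can be normalized away) or strictly suboptimal.

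First, I would analyze a single ring at level $k$ for variable $u_i$ in isolation. Each ring edge is a triple containing a $v_i[\gamma, k]$, an adjacent $\bar{v}_i[\gamma, k]$, and a ``ring-filler'' vertex appearing in no other edge of $f(I)$. Since every filler vertex must be ring-covered to be covered at all, and the ring is an even-length cycle of triples, any matching covering all fillers must select exactly one of the two alternating sets of $d_i$ ring edges. A swap argument shows that missing a filler vertex is strictly suboptimal: swapping in a ring edge to cover it displaces at most one previously covered vertex (a free element), which can be recovered via its tree edge. Hence an optimal $M$ chooses, for each fixed $(i, k)$, either the ``true'' ring matching (covering the $v_i[\gamma, k]$ via ring edges) or the ``false'' ring matching (covering the $\bar{v}_i[\gamma, k]$).

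Second, I would prove the consistency claim (1) by analyzing the binary trees. For fixed $(i, \gamma)$, the leaves $v_i[\gamma, 1], \ldots, v_i[\gamma, K]$ sit below a binary tree whose non-leaf, non-root vertices appear only in tree edges. If the $K$ ring choices for $u_i$ are all consistent (say all ``true''), then the leaves are uniformly ring-covered, and a greedy bottom-up tree matching covers every non-root internal vertex. An inconsistent choice forces a conflict: some leaves are ring-covered and some are not, and a straightforward induction on the tree depth shows that the resulting tree matching must leave strictly more internal vertices uncovered. This local loss can always be undone by normalizing the ring choices to be consistent, giving (1). Claim (2) follows immediately, since the canonical greedy matching described above covers precisely all non-root, non-leaf tree vertices.

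Third, I would handle (3) and (4) on clause vertices. The vertices $s_1[j]$ and $s_2[j]$ appear only in clause edges, each of which contains some root $u_i[\gamma]$ or $\bar{u}_i[\gamma]$. By the previous steps, a root vertex $u_i[\gamma]$ is uncovered in $M$ iff $u_i$ is assigned the value ``true'' (and analogously for $\bar{u}_i[\gamma]$). Hence $\{s_1[j], s_2[j]\}$ can be jointly covered by a clause edge iff at least one literal of $c_j$ is true under the induced assignment. A final vertex count confirms optimality: the canonical matching covers every vertex except $\sum_i d_i$ roots (one per literal occurrence, half of the total) and $2(m - s)$ clause vertices, where $s$ is the number of satisfied clauses, so maximizing $|M|$ is equivalent to maximizing $s$. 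The main obstacle I anticipate is making the consistency step fully rigorous: an adversarial $M$ could in principle trade off tree-internal coverage for additional clause-edge coverage elsewhere, so one must verify that inconsistent ring choices cannot be globally compensated through the clause gadget. The binary tree depth $\log K = \lfloor \log(3B/2 + 1) \rfloor$ was calibrated by Kann precisely so that the internal-vertex loss from any inconsistency strictly exceeds any possible gain in clause-edge coverage, and the proof must invoke this bound explicitly.
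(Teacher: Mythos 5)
The paper does not actually prove Lemma \ref{lem:opt-matchings}: it is stated as a summary of Kann's analysis and attributed to \cite{Kann91}, with the surrounding text only recording that Kann uses it to obtain an $L$-reduction. So there is no in-paper proof to match your attempt against; the following evaluates your argument on its own terms. Your overall strategy---local analysis of the ring gadget, propagation of consistency through the binary trees, then the clause-vertex case analysis and a closing vertex count---is the standard route and is essentially what Kann's argument does.

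That said, there are two concrete problems. First, your description of the ring gadget is internally inconsistent: if each of the $2d_i$ ring edges contained its own private filler vertex appearing in no other edge, then covering all fillers would force all $2d_i$ edges into the matching, which is impossible for pairwise disjoint triples around a cycle; yet you conclude that an alternating set of only $d_i$ edges covers every filler. In the actual construction each inner vertex lies in exactly two consecutive ring edges, and that is precisely what makes ``cover all inner vertices $\Rightarrow$ choose one of the two alternating halves'' true. Second, and more seriously, the step you yourself identify as the main obstacle---that a mixed choice of ring matchings for $u_i$ cannot be globally compensated by freeing additional root vertices for clause edges---is the entire content of the lemma, and you do not carry it out. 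One must actually bound the number of tree vertices lost to an inconsistent leaf pattern from below and compare it with the at most constant number of vertices gained per additional clause edge, which is exactly where the value $K = 2^{\lfloor \log(3B/2+1)\rfloor}$ enters; writing that ``the proof must invoke this bound explicitly'' is not the same as invoking it. A smaller issue: your swap and normalization arguments establish that \emph{some} optimal matching has the canonical form, while the lemma asserts this of \emph{every} optimal matching; you should either make the swaps strictly improving throughout or state the weaker conclusion, which is all that Lemma \ref{lem:symmetry} and Corollary \ref{cor:optimal-matching} actually use.
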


  As a consequence of Kann's analysis of the optimal matchings in $f(I)$, he is able to show that $f$ is an $L$-reduction from $3$SAT-$B$ to $3$DM-$3$.

  \subsection{Modification of Kann's reduction} 

  In this section, we describe a reduction $f' : 3$SAT-$B \too 3$DM-$3$ such that for each satisfiable instance $I$ of $3$SAT-$B$, $f'(I)$ admits a perfect matching. In the reduction $f$ above, even if $I$ is satisfiable, there may be many root vertices that are not in an optimal matching, $M$. In particular, if a clause $c_j$ is satisfied by $u_{i}$ and $u_{i'}$, then at most one of the edges
  \[
  \set{u_{i}[\gamma],\ s_1[j],\ s_2[j]} \quad\text{and}\quad\set{u_{i'}[\gamma'],\ s_1[j],\ s_2[j]}
  \]
  can appear in $M$. Hence, at most one of $u_i[\gamma]$ and $u_{i'}[\gamma']$ can appear in $M$. To remedy this problem, we define $f'(I)$ to be the disjoint union of three copies of $f(I)$
  \[
  f'(I) = f(I)_1 \sqcup f(I)_2 \sqcup f(I)_3.
  \]
  We then add an edge for each root vertex in $f(I)$ that contains the corresponding root vertices in each disjoint copy of $f(I)$. Specifically, if $u_{i}[\gamma]_1$, $u_i[\gamma]_2$, and $u_{i}[\gamma]_3$ are the three copies in $f'(I)$ of a root vertex $u_i[\gamma]$ in $f(I)$, then we include the edge
  \begin{equation}
    \label{eqn:root-edge}
    \set{u_{i}[\gamma]_1,\ u_{i}[\gamma]_2,\ u_{i}[\gamma]_3}    
  \end{equation}
  in $f'(I)$. We now describe the structure of optimal matchings $M'$ in $f'(I)$. Let $M_1$, $M_2$, and $M_3$ be the restrictions of an optimal matching $M'$ for $f'(I)$ to $f(I)_1$, $f(I)_2$, and $f(I)_3$ respectively. Thus, we can write
  \begin{equation}
    \label{eqn:new-matching}
    M' = M_1 \cup M_2 \cup M_3 \cup R    
  \end{equation}
  where $R$ contains those edges in $M'$ of the form (\ref{eqn:root-edge}). 

  \begin{lem}
    \label{lem:symmetry}
    There exists an optimal matching $M'$ for $f'(I)$ such that the matchings $M_1$, $M_2$, and $M_3$ contain precisely the same edges as an optimal matching $M$ for $f(I)$.
  \end{lem}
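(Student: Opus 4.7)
My plan has two main steps. First, use a symmetrization exchange to show that there is an optimal matching of $f'(I)$ in which the three component matchings coincide. Second, argue that this common matching must itself be optimal for $f(I)$.

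For step one, take any optimal matching $\tilde M' = \tilde M_1 \cup \tilde M_2 \cup \tilde M_3 \cup \tilde R$; after relabeling copies, assume $|\tilde M_1| \geq |\tilde M_2|, |\tilde M_3|$. Writing $U_i$ for the set of root positions left uncovered by $\tilde M_i$ in $f(I)_i$, each root-triple edge of $\tilde R$ uses a single root position across all three copies, so $|\tilde R| \leq |U_1 \cap U_2 \cap U_3|$. Replace $\tilde M_2$ and $\tilde M_3$ with natural copies of $\tilde M_1$ in their respective $f(I)_i$ (valid because the three copies are vertex-disjoint and isomorphic), and include all root-triple edges at positions in $U_1$. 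The new matching has size $3|\tilde M_1| + |U_1| \geq |\tilde M_1| + |\tilde M_2| + |\tilde M_3| + |\tilde R|$, so it is also optimal and symmetric: $M_1 = M_2 = M_3 =: M$.

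For step two, observe that for any matching $\tilde M$ of $f(I)$, the symmetric matching built from $\tilde M$ has size $3|\tilde M| + |U(\tilde M)| = |V(f(I))| - |N(\tilde M)|$, where $N(\tilde M)$ is the set of uncovered non-root vertices. Hence the optimal symmetric matching size equals $|V(f(I))| - \min_{\tilde M}|N(\tilde M)|$, and by step one, the common $M$ must achieve this minimum. I would then show that any optimal matching $M^*$ of $f(I)$ also achieves it: by Lemma \ref{lem:opt-matchings}, $|N(M^*)| = 2 u(M^*)$ where $u(M^*)$ is the number of unsatisfied clauses; and for any $\tilde M$, decomposing $|\tilde M|$ into ring, tree, and clause contributions together with $|\tilde M| \leq |M^*|$ yields the key inequality $|N(\tilde M)| - |N(M^*)| \geq \max(0, u(M^*) - u(\tilde M))$, since any reduction in the unsatisfied-clause count must be compensated by an even larger loss in ring/tree coverage. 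Thus the minimum is attained exactly by optimal matchings of $f(I)$, and the common $M$ from step one is one such matching.

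The main obstacle will be the decomposition argument, which requires quantifying how many edges in any matching of $f(I)$ come from rings, trees, and clauses and then using the rigid structure of Kann's ring-of-trees embedding to derive the inequality $|N(\tilde M)| \geq |N(M^*)|$. Once this is in hand, the counting identity $3|\tilde M| + |U(\tilde M)| = |V(f(I))| - |N(\tilde M)|$ directly delivers the conclusion of the lemma.
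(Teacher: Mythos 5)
Your proposal follows essentially the same route as the paper: symmetrize an optimal matching of $f'(I)$ by replacing the two smaller components with copies of the largest (your explicit bound $|\tilde R|\le |U_1\cap U_2\cap U_3|\le |U_1|$ just makes precise what the paper asserts when it says the replacement ``cannot decrease the size of $M'$''), and then argue that the common component must be an optimal matching of $f(I)$ via the structure of optimal matchings. The one step you defer --- the ``key inequality'' saying that reducing uncovered non-root vertices below $|N(M^*)|$ is impossible because gains in clause coverage are offset by losses in ring/tree coverage --- is exactly the point where the paper's proof also leans entirely on Lemma \ref{lem:opt-matchings} (``including more clause edges cannot increase the size of the matching''), so your plan is correct and no less complete than the paper's own argument.
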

  \begin{proof}
    Suppose $M'$ is an optimal matching for $f'(I)$. We may assume without loss of generality that the matchings $M_1$, $M_2$ and $M_3$ are all identical to some matching $M$ on $f(I)$. Indeed, if, say $M_1$ is the largest of the three matchings, we can increase the size of $M'$ by replacing $M_2$ and $M_3$ with identical copies of $M_1$. Since the only edges between $M_1$, $M_2$, and $M_3$ are edges of the form (\ref{eqn:root-edge}), replacing $M_2$ and $M_3$ with copies of $M_1$ cannot decrease the size of $M'$. Thus, we may assume that
    \[
    \abs{M'} = 3 \abs{M} + \abs{R}
    \]
    where $M$ is some matching on $f(I)$, and $R$ consists of edges of the form (\ref{eqn:root-edge}).

    We will now argue that $M$ is indeed an optimal matching on $f(I)$, hence has the form described in Lemma \ref{lem:opt-matchings}. Notice that if $M$ is optimal for $f(I)$, then by including all edges in $R$ containing uncovered root vertices, $M'$ covers every ring, tree, and root vertex. Thus, the only way to obtain a larger matching would be to include more clause edges. However, by Lemma \ref{lem:opt-matchings}, including more clause edges cannot increase the size of the matching $M$. Thus, we may assume $M$ is an optimal matching for $f(I)$.
  \end{proof}

  \begin{cor}
    \label{cor:optimal-matching}
    If $I$ is an instance of $3$-SAT-$B$ with $m$ clauses and $\opt(I) = c m$ for some $c \leq 1$, then an optimal matching $M'$ in $f'(I)$ leaves precisely $6 (1 - c) m$ vertices uncovered.
  \end{cor}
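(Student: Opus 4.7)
The plan is to use the canonical form of an optimal matching provided by Lemma \ref{lem:symmetry} together with the structural description of optimal matchings on $f(I)$ from Lemma \ref{lem:opt-matchings}, and then count uncovered vertices directly. Since all optimal matchings on $f'(I)$ have the same cardinality and $f'(I)$ has a fixed total number of vertices, it suffices to count for any one optimal $M'$.

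First, I would apply Lemma \ref{lem:symmetry} to assume $M' = M_1 \cup M_2 \cup M_3 \cup R$ where $M_1 = M_2 = M_3 = M$ for some optimal matching $M$ on $f(I)$, and $R$ is a collection of root edges of the form $\set{u_i[\gamma]_1,\ u_i[\gamma]_2,\ u_i[\gamma]_3}$. Next, I would invoke Lemma \ref{lem:opt-matchings} to classify the uncovered vertices of $M$: by parts (3) and (4), the clause vertices $s_1[j]$ and $s_2[j]$ are both covered if $c_j$ is satisfied and both uncovered if $c_j$ is unsatisfied in the assignment corresponding to $M$, so $\opt(I) = cm$ gives exactly $2(1-c)m$ uncovered clause vertices; by parts (1) and (2), all ring and tree vertices are covered; the only remaining possibly-uncovered vertices of $M$ are root vertices.

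Finally, I would verify that the root edges in $R$ cover every otherwise-uncovered root in $M'$. Because $M_1 = M_2 = M_3$, a root $u_i[\gamma]$ is uncovered in $M$ if and only if each of its three copies $u_i[\gamma]_1, u_i[\gamma]_2, u_i[\gamma]_3$ is uncovered in $M_1 \cup M_2 \cup M_3$, and the edge $\set{u_i[\gamma]_1,\ u_i[\gamma]_2,\ u_i[\gamma]_3}$ can be added to $R$ without conflict. Including one such edge for each uncovered root in $M$ covers every root vertex of $f'(I)$. The only vertices of $M'$ that remain uncovered are therefore the clause vertices across the three copies corresponding to unsatisfied clauses, of which there are exactly $3 \cdot 2(1-c)m = 6(1-c)m$.

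The main obstacle is justifying that the symmetric canonical form $M_1 = M_2 = M_3$ does not preclude a better total cover; that is, ruling out that some asymmetric combination of clause-edge and root-edge choices across the three copies could leave fewer vertices uncovered. This is precisely the content of Lemma \ref{lem:symmetry}, so once that lemma is invoked, the remaining argument is a straightforward counting exercise with no further subtleties.
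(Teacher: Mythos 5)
Your proposal is correct and follows the paper's intended derivation: the corollary is stated without a separate proof precisely because it falls out of Lemma \ref{lem:symmetry} (reduce to the symmetric form $M_1=M_2=M_3=M$ with $M$ optimal for $f(I)$) combined with the structural description in Lemma \ref{lem:opt-matchings}, exactly as you argue. Your counting---$2(1-c)m$ uncovered clause vertices per copy, all root vertices absorbed by the edges in $R$, and the reduction to a single optimal matching via the fixed cardinality of optimal matchings---is the same argument the paper relies on implicitly.
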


  \begin{lem}
    \label{lem:size-bound}
    There exists a constant $C > 0$ depending only on $B$ such that the number of vertices in $f'(I)$ is at most $C m$.
  \end{lem}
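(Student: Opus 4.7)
The plan is a direct counting argument, breaking the vertices of $f'(I)$ into three classes: ring vertices, tree-internal vertices, and clause vertices. First I would observe that the multiplicity parameter satisfies $K = 2^{\floor{\log(3B/2 + 1)}} \leq 3B/2 + 1$, so $K = O(B)$. Assume without loss of generality that every Boolean variable $u_i$ appears in at least one clause, so $n \leq 3m$ and $\sum_{i=1}^n d_i \leq 3m$ (since each clause contributes at most $3$ literal occurrences).

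Next I would bound the vertex count contributed by a single variable $u_i$ in $f(I)$. Its ring-of-trees component consists of $K$ separate rings (one per layer $k$), each being the standard Garey--Johnson truth-setting gadget of length $2 d_i$. Each such ring uses $O(d_i)$ vertices (the labeled ring points $v_i[\gamma,k], \bar{v}_i[\gamma,k]$ together with a constant number of auxiliary vertices per ring edge to make the edges $3$-uniform), giving $O(K d_i)$ ring vertices for the variable. The $2 d_i$ binary trees that stitch corresponding ring points across layers have $K$ leaves each; the leaves coincide with ring vertices already counted, while each tree adds $O(K)$ new internal/padding vertices to implement its $K - 1$ binary combinations as $3$-uniform hyperedges. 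Thus the trees contribute another $O(d_i \cdot K)$ vertices. In total, variable $u_i$ contributes $O(K d_i)$ vertices.

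Summing over variables and adding the $2m$ clause vertices $s_1[j], s_2[j]$ yields
\[
\abs{V(f(I))} \;\leq\; \sum_{i=1}^n O(K d_i) + 2m \;=\; O(K) \cdot \sum_i d_i + 2m \;=\; O(K \cdot m) \;=\; O(B \cdot m).
\]
Finally, $f'(I)$ is the disjoint union of three copies of $f(I)$, so $\abs{V(f'(I))} \leq 3\,\abs{V(f(I))} = O(B \cdot m)$; the additional root-linking triples of the form $\set{u_i[\gamma]_1, u_i[\gamma]_2, u_i[\gamma]_3}$ reuse existing root vertices (placed in the three different parts of the tripartition across the three copies) and introduce no new vertices. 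Taking $C$ to be the resulting absolute constant times $B$ proves the claim.

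No step presents a serious obstacle; the only place demanding care is the per-variable count, where one must verify that realizing the ring and binary-tree gadgets as $3$-uniform hyperedges costs only $O(K d_i)$ auxiliary vertices per variable --- this is immediate from the standard Garey--Johnson construction together with Kann's tree gadget, each of which uses only a constant number of auxiliary vertices per hyperedge.
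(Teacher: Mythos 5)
Your proof is correct and follows essentially the same decomposition as the paper's (ring vertices, tree vertices, clause vertices, then triple everything for $f'(I)$). The only difference is bookkeeping: you sum $\sum_i d_i \leq 3m$ directly to get $O(Bm)$, whereas the paper bounds $d_i \leq B$ per variable and uses $n \leq m$ to get $O(B^2 m)$ --- your constant is slightly sharper, but either suffices for the lemma.
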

  \begin{proof}
    We bound the number of ring, tree, and clause vertices separately. Since the vertex set of $f'(I)$ consists of three disjoint copies of the vertices in $f(I)$, it suffices to bound the number of vertices in $f(I)$.
    \begin{description}
    \item[Ring vertices] For each variable $u_i$, there are $K = O(B)$ rings, each consisting of $4 d_i = O(B)$. Thus there are $O(B^2)$ ring vertices for each variable $u_i$, hence a total of $O(B^2 n)$ ring vertices in $f(I)$.
    \item[Tree vertices] Since each ring vertex of the form $v_i[\gamma, k]$ is the leaf of a complete binary tree whose internal nodes and root are tree vertices, there are $O(B^2 n)$ tree vertices in $f(I)$.
    \item[Clause vertices] There two vertices $s_1[j]$ and $s_2[j]$ associated to each of $m$ clauses, hence there are $O(m)$ clauses in total.
    \end{description}
    Therefore, the total number of vertices in $f(I)$ and hence $f'(I)$ is $O(m + B^2 n)$. Clearly, we may assume that $n \leq m$, so that there are $O(B^2 m)$ vertices in $f'(I)$.
  \end{proof}

  \begin{cor}
    \label{cor:gap}
    Let $I$ be an instance of $3$SAT-$B$, and let $M^* = \frac 1 3 \abs{f'(I)} = \abs{f(I)}$ be the number of vertices in $f(I)$. Then for any $c < 1$, there exists a constant $c' < 1$ depending only on $c$ and $B$ such that:
    \begin{itemize}
    \item if $\opt(I) = m$ (i.e., $I$ is satisfiable) then $\opt(f'(I)) = M^*$;
    \item if $\opt(I) \leq c m$ then $\opt(f'(I)) \leq c' M^*$.
    \end{itemize}
  \end{cor}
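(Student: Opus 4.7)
The plan is to derive both parts of the corollary directly from the structural lemmas already established (Lemma \ref{lem:opt-matchings}, Lemma \ref{lem:symmetry}, and Corollary \ref{cor:optimal-matching}) together with the size bound of Lemma \ref{lem:size-bound}; no new ideas are required, only careful bookkeeping.

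For the first bullet, suppose $\opt(I) = m$ and fix a satisfying assignment for $I$; let $M$ be the corresponding optimal matching of $f(I)$ produced by Lemma \ref{lem:opt-matchings}. Since every clause is satisfied, no clause vertex is left uncovered, and so by the last sentence of that lemma the only uncovered vertices of $f(I)$ under $M$ are root vertices. Place three identical copies of $M$ inside $f(I)_1$, $f(I)_2$, $f(I)_3$; the uncovered vertices of $f'(I)$ are then exactly the three copies of each root uncovered by $M$. For each such root, the triple-edge (\ref{eqn:root-edge}) of $f'(I)$ joins its three copies, touches no vertex already matched, and may be appended to the matching without conflict. The resulting matching of $f'(I)$ covers every vertex, so $\opt(f'(I)) = M^*$.

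For the second bullet, suppose $\opt(I) \leq cm$. Since $6(1-\cdot)m$ is monotone decreasing, Corollary \ref{cor:optimal-matching} gives that any optimal matching of $f'(I)$ leaves at least $6(1-c)m$ vertices uncovered, and since each matching edge covers three of the $3M^*$ vertices of $f'(I)$,
\[
\opt(f'(I)) \;\leq\; \frac{3 M^* - 6(1-c) m}{3} \;=\; M^* - 2(1-c) m.
\]
The only (mild) hurdle is now to convert this additive deficit into a multiplicative one. By Lemma \ref{lem:size-bound}, $M^* = \abs{f(I)} \leq C m$ for a constant $C > 0$ depending only on $B$, so $m \geq M^*/C$ and hence
\[
\opt(f'(I)) \;\leq\; \paren{1 - \tfrac{2(1-c)}{C}} M^*.
\]
Setting $c' = 1 - 2(1-c)/C \in (0, 1)$ yields a constant depending only on $c$ and $B$, as required.
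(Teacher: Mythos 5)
Your proposal is correct and follows exactly the route the paper intends: the paper states this corollary without proof as an immediate consequence of Lemma \ref{lem:opt-matchings}, Lemma \ref{lem:symmetry}, Corollary \ref{cor:optimal-matching}, and Lemma \ref{lem:size-bound}, and your argument simply fills in that bookkeeping (perfect matching via root edges in the satisfiable case, and conversion of the additive deficit $2(1-c)m$ into a multiplicative factor via $M^* \leq Cm$ in the gap case). The only cosmetic quibble is the claim $c' \in (0,1)$; the statement only needs $c' < 1$, which your choice clearly satisfies.
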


  Theorem \ref{thm:gap-3dm-3-hardness} follows immediately from Corollary \ref{cor:gap} and the following incarnation of the PCP theorem.

  \begin{thm}[PCP Theorem \cite{AS98, ALMSS98, Dinur07}]
    There exist a absolute constants $c < 1$ and $B$ such that it is NP-hard to determine if an instance $I$ of $3$SAT-$B$ satisfies $\opt(I) = m$ or $\opt(I) \leq c m$.
  \end{thm}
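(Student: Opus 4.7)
The plan is to derive the theorem from the standard formulation of the PCP theorem, $\NP = \PCP(O(\log n), O(1))$, in two steps: first, extract a constant inapproximability gap for Max-$3$SAT with no bound on variable occurrences; then, enforce the bounded-occurrence condition $B$ by an expander-based reduction.

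For the first step, I would use the verifier form of the PCP theorem. Any language $L \in \NP$ has a verifier using $O(\log n)$ random bits, querying $O(1)$ proof positions, with perfect completeness and soundness $1/2$. For each of the $\poly(n)$ random strings $r$, the verifier's accept/reject decision is a Boolean function of a constant number of proof bits, and so can be written as a conjunction of $O(1)$ $3$-clauses on those bits viewed as variables. Taking the AND over all $r$ yields a $3$-CNF formula $\varphi$ of size $\poly(n)$ such that YES instances produce satisfiable $\varphi$ while NO instances produce $\varphi$ in which no assignment satisfies more than a $(1-\delta)$-fraction of the clauses, for some absolute $\delta > 0$. This establishes the gap claim with $c = 1 - \delta$, but without any control on the number of occurrences of each variable.

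Next, I would apply the expander-replacement trick of Papadimitriou and Yannakakis to bound occurrences. For each variable $x$ appearing $d$ times in $\varphi$, introduce fresh variables $x_1, \ldots, x_d$ arranged at the vertices of a constant-degree expander $G_x$; replace the $i$th occurrence of $x$ in $\varphi$ by $x_i$, and for each edge $(i,j)$ of $G_x$ add the two $3$-clauses encoding $x_i \leftrightarrow x_j$ (padded with a dummy literal as needed). In the resulting formula $\varphi'$, each variable appears in at most $B = \Delta + 2$ clauses, where $\Delta$ is the expander's degree. Completeness is preserved: any satisfying assignment of $\varphi$ lifts by setting all $x_i$ equal to $x$. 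For soundness, given any assignment to $\varphi'$, define a ``majority'' assignment to the original variables; by the expander mixing lemma, for each variable $x$ either the assignment agrees with its majority on all but a small fraction of copies, in which case the $(1-\delta)$-gap of the first step transfers up to a small additive loss, or else a constant fraction of the equality clauses across a large cut in $G_x$ is violated. Balancing these contributions preserves a constant overall soundness gap $c' < 1$ in $\varphi'$.

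The main obstacle is of course the PCP theorem itself, which I invoke as a black box; its proof, either via the algebraic route of \cite{ALMSS98, AS98} or Dinur's gap amplification \cite{Dinur07}, is far beyond the scope of this paper. The only delicate calibration in the remaining steps is choosing the expander parameters (spectral gap and degree) so that the soundness loss introduced by the equality clauses is strictly smaller than the gap $\delta$ produced by the PCP step; this is a standard estimate but is the crucial inequality on which the final constant $c < 1$ depends.
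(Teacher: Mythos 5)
Your proposal is correct and follows the same route the paper implicitly relies on: the paper cites this statement as a black box, and the standard derivation is exactly yours---the verifier form of the PCP theorem yields a constant gap for Max-$3$SAT, and the Papadimitriou--Yannakakis expander-replacement step (the source of the paper's own footnote reference to \cite{PY91}) converts this to Max-$3$SAT-$B$ while preserving perfect completeness and a constant soundness gap. The only point worth flagging is that the soundness analysis of the replacement step really needs the edge-expansion of the cut between majority and minority copies (so that disagreeing copies violate proportionally many equality clauses) rather than the mixing lemma per se, but this is the standard estimate and does not affect correctness.
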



  \section{Preferences with Many Unstable Triples}
  \label{sec:unstable}

  We first consider the case where $n = 2$. We denote $A = \set{a_1, a_2}$, $B = \set{b_1, b_2}$, and $D = \set{d_1, d_2}$. Consider preference lists $P$ as described in the following table, where most preferred partners are listed first.
  \begin{center}
    \begin{tabular}{l|rrr}
      player & \multicolumn{3}{c}{preferences}\\
      \hline
      $a_1$ & $b_1 d_1$ & $b_2 d_2$ & $\cdots$\\
      $a_2$ & $b_2 d_1$ & $\cdots$\\
      $b_1$ & $a_1 d_1$ & $\cdots$\\
      $b_2$ & $a_1 d_2$ & $a_2 d_1$ & $\cdots$\\
      $d_1$ & $a_2 b_2$ & $a_1 b_1$ & $\cdots$\\
      $d_2$ & $a_1 b_2$ & $\cdots$
    \end{tabular}      
  \end{center}
  The ellipses indicate that the remaining preferences are otherwise arbitrary. Suppose $M$ is a stable marriage for $P$. We must have either $a_1 b_1 d_1 \in M$ or $a_1 b_2 d_2 \in M$, for otherwise the triple $a_1 b_2 d_2$ is unstable. However, if $a_1 b_1 d_1 \in M$, then $a_2 b_2 d_1$ is unstable. On the other hand, if $a_1 b_2 d_2 \in M$ then $a_1 b_1 d_1$ is unstable. Therefore, no such stable $M$ exists. In particular, every marriage $M$ contains at least one unstable triple.

  The idea of the proof of Theorem \ref{thm:msm} is to choose preferences $P$ such that when restricted to many sets of two women, two men and two dogs, the preferences are as above. Thus any marriage containing families consisting of these players must induce unstable triples.

  \begin{proof}[Proof of Theorem \ref{thm:msm}]
    We partition the sets $A$, $B$ and $D$ each into two sets of equal size: $A = A_1 \cup A_2$, $B = B_1 \cup B_2$, $D = D_1 \cup D_2$. Consider the preferences $P$ described in Section~\ref{sec:many-blockers}. We will prove that for $P$, every matching $M$ contains at least $n^3 / 128$ unstable triples. Let $M$ be an arbitrary marriage, and suppose $\ins(M) < n^3/128$. We consider two cases separately.
    \begin{description}
    \item[Case 1: $\abs{M \cap (A_1 \times B_1 \times D_1)} \leq n / 4$.] Let $A_1'$, $B_1'$ and $D_1'$ be the subsets of $A_1$, $B_1$ and $D_1$ respectively of players not in triples contained in $A_1 \times B_1 \times D_1$. By the hypothesis, $\abs{A_1'}, \abs{B_1'}, \abs{D_1'} \geq n / 4$. Let $d_1 \in D_1'$. Notice that if $p_M(d_1) \notin A_2 \times B_2$, then $a_1 b_1 d_1$ is unstable for all $a_1 \in A_1'$, $b_2 \in B_1'$. Since fewer than $n^3 / 128$ triples in $A_1' \times B_1' \times D_1'$ are unstable, at least $3 n / 8$ dogs $d_1 \in D_1'$ must have families $a_2 b_2 d_1 \in A_2 \times B_2 \times D_1'$.

      Since $\abs{M \cap (A_2 \times B_2 \times D_1)} \geq 3 n / 8$, we must have $\abs{M \cap (A_1 \times B_2 \times D_2)} \leq n / 8$. Thus, there must be at least $n / 8$ women $a_1 \in A_1$ with partners not in $(B_1 \times D_1) \cup (B_2 \times D_2)$. However, every such $a_1$ forms an unstable triple with every $b_2 \in B_2$ and $d_2 \in D_2$ which are not in families in $A_1 \times B_2 \times D_2$. Since there at least $3 n / 8$ such $b_2$ and $d_2$, there are at least
      \[
      \paren{\frac n 8}\paren{\frac{3 n} 8}\paren{\frac{3n} 8} > \frac n {128}
      \]
      blocking triples, a contradiction.

    \item[Case 2: $\abs{M \cap (A_1 \times B_1 \times D_1)} > n  / 4$.] In this case, we must have $\abs{M \cap (A_2 \times B_2 \times D_1)} < n / 4$. This implies that 
      \begin{equation}
        \label{eqn:marriage-size}
        \abs{M \cap (A_1 \times B_2 \times D_2)} > 3 n / 8  
      \end{equation}
      for otherwise triples $a_2 b_2 d_1 \in (A_2 \times B_2 \times D_1)$ with $p_M(b_2) \notin A_1 \times D_1$ form more than $n^3 / 128$ unstable triples. But (\ref{eqn:marriage-size}) contradicts the Case 2 hypothesis, as $\abs{A_1} = n / 2$. 
    \end{description}
    Since both cases lead to a contradiction, we may conclude that any $M$ contains at least $n^3 / 128$ unstable triples, as desired.
  \end{proof}


  \section{Results for $3$PSA}
  \label{sec:3psa}

  In this appendix, we prove our main results for $3$PSA, Theorems \ref{thm:main-3sr-lb} and \ref{thm:asr-informal}. 

  \subsection{\TPSA\ hardness of approximation}
  \label{sec:gen-to-3sr}

  \begin{proof}[Proof sketch of Theorem \ref{thm:main-3sr-lb}]
    As noted in Remark \ref{rem:3sr}, we may view \TGSM\ as a special case of \TPSA\ with incomplete preferences. The NP-hardness of approximation of \TPSA\ with incomplete preferences is analogous to the proof of Theorem \ref{thm:main-3gsm-lb}. Given an instance $I$ of \TGSM\ with sets $A$, $B$, and $D$ and preferences $P$, take $U = A \cup B \cup D$ and form \TPSA\ preferences $P'$ by appending the remaining pairs to $P$ arbitrarily. Analogues of Theorem \ref{thm:msm} and Lemma \ref{lem:embedding} hold for this instance of \TPSA, whence Theorem \ref{thm:main-3sr-lb} follows. We leave details to the reader.    
  \end{proof}

  \subsection{\TPSA\ approximation}
  \label{sec:3psa-approx}

  \AMSM\ can easily be adapted for \TPSA. Let $U$ be a set of players with $\abs{U} = 3n$, and let $P$ be a set of complete preferences for the players in $U$. Given a triple $abc \in \binom{U}{3}$, we form the \dft{stable set} $S_{abc}$ consisting of triples that at least one of $a$, $b$, $c$ does not prefer to $abc$. The approximation algorithm \ASR\ for \TPSA\ is analogous to \AMSM: form a matching $M$ by finding a triple $abc$ that maximizes $\abs{S_{abc}}$, then recursing. The following lemma and its proof are analogous to Lemma~\ref{lem:good-edges}.

  \begin{lem}
    \label{lem:good-triples}
    For any set $U$ of players with $\abs{U} = 3n$ and complete preferences $P$, there exists a triples $abc \in \binom{U}{3}$ such that
    \[
    \abs{S_{abc}} \geq 6 n^2 - O(n).
    \]
  \end{lem}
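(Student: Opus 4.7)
The plan is to mimic the pigeonhole argument from Lemma~\ref{lem:good-edges}, adapted to the unordered-triple setting of \TPSA. Pick a threshold $T$ (to be specified), and say that a triple $abc$ is \emph{marked by} a player $a$ whenever $a$ ranks the pair $\{b,c\}$ within their top $T$ preferences (and analogously for $b$ and $c$). I will use a counting argument to produce a triple marked by at least two of its members, then lower-bound $\abs{S_{abc}}$ for such a triple.

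Each player $u \in U$ holds a strict linear order on $\binom{3n-1}{2}$ pairs, so each player contributes exactly $T$ marks, giving a total of $3nT$ marks across all triples. The number of unordered triples is $\binom{3n}{3} = n\binom{3n-1}{2}$, so choosing $T = \floor{\binom{3n-1}{2}/3} + 1$ guarantees $3nT > \binom{3n}{3}$, and hence by pigeonhole at least one triple $abc$ receives marks from at least two of its members, say $a$ and $b$. Since $a$ ranks $\{b,c\}$ within their top $T$, the number of pairs in $\binom{U\setminus\{a\}}{2}$ that $a$ does \emph{not} prefer to $\{b,c\}$ is at least $\binom{3n-1}{2} - T + 1$, and each such pair $\{y,z\}$ gives a distinct triple $\{a,y,z\} \in S_{abc}$ via the shared member $a$; symmetrically for $b$. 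These two families overlap only in triples containing both $a$ and $b$, of which there are exactly $3n-2$, so by inclusion--exclusion
\[
\abs{S_{abc}} \;\geq\; 2\paren{\binom{3n-1}{2} - T + 1} - (3n - 2) \;\geq\; \tfrac{4}{3}\binom{3n-1}{2} - O(n) \;=\; 6n^2 - O(n),
\]
which is the stated bound.

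The main difference from Lemma~\ref{lem:good-edges} is largely cosmetic: triples are unordered in the \TPSA\ setting and may carry up to three marks per triple rather than one per coordinate, but the pigeonhole still produces a doubly-marked triple under the analogous threshold choice, and the remainder is routine arithmetic in which $\binom{3n-1}{2} \sim 9n^2/2$ yields the stated constant $6$. I expect no substantive obstacle beyond the careful choice of $T$ and the bookkeeping of lower-order terms.
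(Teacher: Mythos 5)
Your proof is correct and takes essentially the same route the paper intends: the paper omits the argument, saying only that it is ``analogous to Lemma~\ref{lem:good-edges},'' and your marking/pigeonhole argument with threshold $T\approx\binom{3n-1}{2}/3$ followed by inclusion--exclusion over the two marking members is precisely that adaptation. The bookkeeping checks out, since $\tfrac{4}{3}\binom{3n-1}{2}-(3n-2)=6n^2-O(n)$.
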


  Using Lemma~\ref{lem:good-triples}, we prove Theorem~\ref{thm:asr-informal} analogously to Theorem~\ref{thm:amsm-informal}.
  \begin{proof}[Proof of Theorem~\ref{thm:asr-informal}]
    Each triple $abc$ can intersect at most $3 \binom{3n}{2} \leq \frac{27}{2} n^2$ blocking triples. Thus, by Lemma \ref{lem:good-triples}, the total number blocking triples in the matching $M$ found by \ASR\ is at most
    \[
    \sum_{i = 0}^{n-1} \paren{\frac{15}{2} (n - i + 1)^2 + O(n)} = \frac{5}{2} n^3 + O(n^2).
    \]
    Therefore,
    \[
    \stab(M) \geq 2 n^3 - O(n^2),
    \]
    as the total number of triples in $\binom{U}{3}$ is $\frac{9}{2}n^3 - O(n^2)$. Hence $M$ is a $\frac{4}{9}$-approximation to a maximally stable matching, as desired.
  \end{proof}

\end{document}